\def\BibTeX{{\rm B\kern-.05em{\sc i\kern-.025em b}\kern-.08em
    T\kern-.1667em\lower.7ex\hbox{E}\kern-.125emX}}
\newtheorem{theorem}{Theorem}
\newtheorem{corollary}{Corollary}
\newtheorem{definition}{Definition}
\newtheorem{example}{Example}
\newtheorem{lemma}{Lemma}
\begin{document}

\title{On the Construction of Skew Quasi-Cyclic Codes$^{\dag }$}
\author{Taher Abualrub\thanks{%
T. Abualrub is with the Department of Mathematics and Statistics, American
University of Sharjah, Sharjah, UAE (e-mail: abualrub@aus.edu).}, Ali Ghrayeb%
\thanks{%
A. Ghrayeb is with the Department of Electrical and Computer Engineering,
Concordia University, Montreal, Quebec, Canada (e-mail:
aghrayeb@ece.concordia.ca).}, Nuh Aydin, and Irfan\thanks{%
N. Aydin is with the Department of Mathematics, Kenyon College, Gambier, OH,
USA (e-mail: aydinn@kenyon.edu). } Siap\thanks{%
I. Siap is with the Education Faculty, Adiyaman Univ., Adiyaman, Turkey
(e-mail: isiap@adiyaman.edu.tr).}}
\maketitle

\begin{abstract}%

In this paper we study a special type of quasi-cyclic (QC) codes
called skew QC codes. This set of codes is constructed using a
non-commutative ring called the skew polynomial rings $F[x;\theta
]$. After a brief description of the skew polynomial ring
$F[x;\theta ]$ it is shown that skew QC codes are left submodules
of the ring $R_{s}^{l}=\left( F[x;\theta ]/(x^{s}-1\right) )^{l}.$
The notions of generator and parity-check polynomials are given.
We also introduce the notion of similar polynomials in the ring
$F[x;\theta ]$ and show that parity-check polynomials for skew QC
codes are unique up to similarity. Our search results lead to the
construction of several new codes with Hamming distances exceeding
the Hamming distances of the previously best known linear codes
with comparable parameters.

\end{abstract}%

\section{Introduction}

\footnotetext{$^{\dag }$The work of T. Abualrub and A. Ghrayeb was supported
in part by the Natural Sciences and Engineering Research Council of Canada
(NSERC) while the first author was on sabbatical in the Department of
Electrical and Computer Engineering, Concordia University, Montreal, Quebec,
Canada.} Since the introduction of Shannon theory in 1948, coding theorists
have been trying to design powerful codes that approach the Shannon capacity
with reasonable complexity. Initially the focus was on designing codes that
posses large minimum distances, resulting in several classes of linear block
codes and convolutional codes. With the invention of turbo codes in 1993,
where more focus has been given to reducing the multiplicity of the minimum
distance rather than increasing the minimum distance itself, concatenated
codes that perform within a few tenths of a dB from capacity have been
designed and incorporated in a number of communications applications
standards. However, since these codes have been developed for additive white
Gaussian noise (AWGN) channels, they may not be the best choice for most
wireless communications applications in which the channel normally suffers
from severe fading. Such applications include cellular networks, wireless
local area networks and wireless sensor networks, to name a few. To this
end, a relatively new class of codes termed \textit{space-time coding} has
been introduced, which includes space-time trellis codes \cite{tarokh98} and
space-time block codes \cite{alamouti98}, \cite{tarokh99}. Such codes are
suitable for multiple-input multiple-output (MIMO) systems where the
transmitter and/or receiver are equipped with multiple antennas. It has been
shown that a MIMO\ system with $N_{t}$ transmit and $N_{r}$ receive antennas
achieves a spatial diversity of $N_{t}N_{r}$ \cite{tarokh98}$-$\cite%
{tarokh99}$.$ To achieve better performance, one may need to combine error
correcting coding with space-time coding since the former coding method
introduces temporal diversity. For example, in a coded MIMO system, the
maximum diversity that can be achieved (over block faded channels with
proper interleaving) can be as high as $N_{t}N_{r}d_{\min }^{H},$ where $%
d_{\min }^{H}$ denotes the minimum Hamming distance of the error correcting
code employed \cite{duman08}. This suggests that having a code with a high
minimum Hamming distance is essential since it offers significant
performance improvements.

A significant portion of the work on error correcting codes for over the
last sixty years has been on the construction of different types of codes
defined over commutative rings. At the beginning, most of the research on
error correcting codes was concentrated on codes over finite fields. More
recently, it has been shown by many researchers (e.g., \cite{nuh},\cite%
{cyclicZ4},\cite{newcode},\cite{Hamm}) that codes over rings are a very
important class and many types of codes with good parameters can be
constructed over rings. We believe that another important direction to
consider is the construction of codes using non-commutative rings. Research
on this topic is very recent and interesting. Boucher, et. al generalized in
\cite{Boucher07}, \cite{Boucher07pri} the notion of cyclic codes by using
generator polynomials in a non-commutative polynomial ring called skew
polynomial ring. They gave examples of skew cyclic codes with Hamming
distances larger than previously best known linear codes of the same length
and dimension \cite{Boucher07}.

Quasi-cyclic (QC) codes of index $l$ over a finite field $F$ are linear
codes where the cyclic shift of any codeword by $l$ positions is another
codeword. QC codes of index $l=1$ are well known cyclic codes. QC codes have
been shown to be a very important class of linear codes \cite{chen}, \cite%
{kasami}, \cite{grobner2}, \cite{sonqc}, \cite{seguin}, and \cite{koshy}.
Many of the best known and optimal linear codes that have been constructed
so far are examples of QC codes (e.g., \cite{chen}, \cite{nine},\cite{best},%
\cite{optimal}, and \cite{ieee}.)

In this paper we study the construction of skew QC codes. This work has been
motivated by the fact that the class of skew QC codes is much larger than
the class of QC codes, suggesting that better codes may be found in this
class. Indeed, we have found many examples of skew QC codes that meet or
exceed the parameters of best known linear codes. In particular we are
interested in the study of 1-generator skew QC codes and their properties.
We show that this class of codes share many properties of QC codes.

The rest of the paper is organized as follows. Section II includes a brief
description of the skew polynomial ring $F[x;\theta ].$ In Section III, we
discuss the structure of skew QC codes where we show that this type of codes
is a left submodule of $R_{s}^{l}=\left( F[x;\theta ]/(x^{s}-1\right) )^{l}.$
We also discuss the dimension and the parity check polynomial for these
codes. In Section IV we introduce the notion of similar polynomials. We will
show that the parity-check polynomial of a skew QC code is unique up to
similarity. Section V includes our search results. As a result of a search
in the class of skew QC codes over $GF(4)$, we obtain seven new linear
quaternary codes with Hamming distances greater than previously best known
linear codes with the given parameters. These new codes have the parameters $%
\left[ 48,12,24\right] $, $\left[ 72,21,29\right] ,$ $\left[ 48,16,20\right]$%
, $[96,16,49]$, $[100,20,47]$, $[140,20,72]$, and $[110,22,51]$.
We also construct a large number of skew QC codes with Hamming
distances equal to the Hamming distances of the best known linear
codes with the given parameters. Section VI concludes the paper.

\bigskip

\section{The Skew Polynomial Ring $F[x;\protect\theta ]$}

Let $F$ be a finite field of characteristic $p$. Let $\theta $ be an
automorphism of \ $F$ with $\left\vert \left\langle \theta \right\rangle
\right\vert =m.$ Let $K$ be the subfield of $F$ fixed under $\left\langle
\theta \right\rangle .$ Then, $\left[ F:K\right] =m$ and $K=GF(p^{t}),$ $%
F=GF(q)$ where $q=p^{tm}.$ Moreover since $K$ is fixed under $\theta $ then
we have $\theta \left( a\right) =a^{p^{t}}$ for all $a\in F.$

\begin{example}
\label{Example1}Consider the finite field $GF(4)=\left\{ 0,1,a ,a^2\right\} $
where $a ^2+a +1=0.$ Define an automorphism
\begin{eqnarray*}
\theta &:&GF(4)\rightarrow GF(4)\text{ by} \\
\theta (z) &=&z^2.
\end{eqnarray*}

Then $\theta (0)=0,~\theta (1)=1,~\theta (a)=a ^2$ and $\theta (a ^2)$ $=a .$
Hence the fixed field $K$ is just the binary field $GF(2).$
\end{example}

\begin{definition}
Following the above notation, define the skew polynomial set $F[x;\theta ]$
to be
\begin{equation*}
F[x;\theta ]=\left\{
\begin{array}{c}
f(x)=a_{0}+a_{1}x+a_{2}x^{2}+\cdots +a_{n}x^{n}|~\text{\ where } \\
a_{i}\in \text{ }F\text{ for all }i=0,1,\ldots ,n%
\end{array}%
\right\}
\end{equation*}
\end{definition}

where addition of these polynomials is defined in the usual way while
multiplication is defined using the distributive law and the rule
\begin{equation*}
(ax^{i})(bx^{j})=a\theta ^{i}(b)x^{i+j}.
\end{equation*}

\begin{example}
Using the same automorphism from Example\ref{Example1}, we get%
\begin{eqnarray*}
(a x)(a ^2x) &=&a \theta (a ^2)x^{2} \\
&=&a\cdot a x^2=a ^{2}x^{2}.
\end{eqnarray*}%
On the other hand we have,%
\begin{eqnarray*}
(a ^{2}x)(a x) &=&a ^{2}\theta (a )x^{2} \\
&=&a ^{2}(a ^{2})x^{2}=a x^{2}.
\end{eqnarray*}%
This shows that $(a x)(a ^{2}x)\neq $ $(a ^{2}x)(a x).$
\end{example}

\begin{theorem}
\cite{Mcdonald74} The set $F[x;\theta ]$ with respect to addition and
multiplication defined above forms a non-commutative ring called the skew
polynomial ring.
\end{theorem}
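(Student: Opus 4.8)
The plan is to verify the ring axioms one at a time, treating the additive side as routine and concentrating the real work on the multiplicative side. Since the sum of two skew polynomials is the ordinary coefficientwise sum, $(F[x;\theta],+)$ is an abelian group with the zero polynomial as identity and termwise negation as inverse; nothing is twisted here. The constant polynomial $1$ is a two-sided multiplicative identity: from the rule $(ax^{i})(bx^{j})=a\theta^{i}(b)x^{i+j}$ together with $\theta^{0}=\mathrm{id}$ and $\theta(1)=1$ (hence $\theta^{j}(1)=1$ for every $j$), one gets $1\cdot(bx^{j})=bx^{j}$ and $(bx^{j})\cdot 1=bx^{j}$, and this extends to arbitrary polynomials by distributivity.

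Next I would record that the multiplication is well defined and additive in each argument. Writing $f=\sum_{i}a_{i}x^{i}$ and $g=\sum_{j}b_{j}x^{j}$, the product unwinds to $fg=\sum_{i,j}a_{i}\theta^{i}(b_{j})x^{i+j}$; because each iterate $\theta^{i}$ is additive on $F$, this expression is additive in the $a_i$'s and in the $b_j$'s separately, which is exactly the left and right distributive laws. So both distributive laws hold essentially by the definition of the multiplication, and it is here (and only here) that additivity of $\theta$ is used.

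The heart of the argument is associativity, which I would prove first on monomials and then lift to all polynomials via the distributive laws just established. For monomials $ax^{i}$, $bx^{j}$, $cx^{k}$ one computes
\[
\bigl((ax^{i})(bx^{j})\bigr)(cx^{k})=\bigl(a\theta^{i}(b)x^{i+j}\bigr)(cx^{k})=a\,\theta^{i}(b)\,\theta^{i+j}(c)\,x^{i+j+k},
\]
while
\[
(ax^{i})\bigl((bx^{j})(cx^{k})\bigr)=(ax^{i})\bigl(b\theta^{j}(c)x^{j+k}\bigr)=a\,\theta^{i}\bigl(b\theta^{j}(c)\bigr)\,x^{i+j+k}.
\]
These coincide because $\theta^{i}$ is multiplicative, so $\theta^{i}\bigl(b\theta^{j}(c)\bigr)=\theta^{i}(b)\,\theta^{i}(\theta^{j}(c))$, and $\theta^{i}\circ\theta^{j}=\theta^{i+j}$. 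Expanding general $f,g,h$ as sums of monomials and distributing then gives $(fg)h=f(gh)$ in full. The only delicate point in the whole proof is this exponent bookkeeping with iterates of $\theta$, and the thing to stress is that it works precisely because $\theta$ is a field automorphism rather than an arbitrary map.

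Finally, non-commutativity follows at once once $\theta\neq\mathrm{id}$ (which is the case whenever $|\langle\theta\rangle|=m>1$, as in the running $GF(4)$ example): choosing $b\in F$ with $\theta(b)\neq b$ yields $(x)(b)=\theta(b)x\neq bx=(b)(x)$, and the computation in the preceding example is a concrete witness of this. Assembling these pieces shows that $F[x;\theta]$ is a non-commutative ring.
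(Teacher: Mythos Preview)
Your proof is correct and complete: you verify the abelian group structure under addition, the existence of a two-sided unit, both distributive laws (using additivity of each iterate $\theta^{i}$), associativity on monomials via multiplicativity of $\theta^{i}$ and the relation $\theta^{i}\circ\theta^{j}=\theta^{i+j}$, and finally non-commutativity whenever $\theta\neq\mathrm{id}$.

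There is no comparison to make with the paper's own argument, because the paper does not supply a proof of this theorem; it simply cites the result from McDonald's \emph{Finite Rings with Identity}. Your verification of the ring axioms is exactly the standard argument one would find in such a reference, so nothing is missing or in tension.
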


The following facts are straightforward for the ring $F[x;\theta ]:$

\begin{enumerate}
\item It has no nonzero zero-divisors.

\item The units of $F[x;\theta ]$ are the units of $F.$

\item $\deg (f+g)\leq $\textbf{$\max \{\deg (f),\text{ }\deg (g)\}$}

\item $\deg (fg)=\deg (f)+\deg (g).$
\end{enumerate}

The skew polynomial ring $F[x;\theta ]$ was introduced by Ore \cite{Ore33}
in 1933, and a complete treatment of this ring can be found in \cite%
{Jacobson43} and in \cite{Mcdonald74}.

\begin{theorem}
\cite{Mcdonald74} (The Right Division Algorithm) For any polynomials $f$ and
$g$ in $F[x;\theta ]$ $\ $with $f\neq 0$ there exist unique polynomials $q$
and $r$ such that%
\begin{equation*}
g=qf+r\text{ where }\deg (r)<\deg (f).
\end{equation*}
\end{theorem}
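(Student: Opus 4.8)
The plan is to follow the classical Euclidean division argument, splitting into existence (by induction on $\deg(g)$) and uniqueness (by a degree comparison), with the only novelty being careful bookkeeping with the automorphism $\theta$ when cancelling leading terms. Because the ring is non-commutative, the quotient must be written on the \emph{left} of $f$, which is exactly why this is the \emph{right} division algorithm.

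For existence, I would argue by strong induction on $n=\deg(g)$. If $\deg(g)<\deg(f)$ (in particular if $g=0$), take $q=0$ and $r=g$. Otherwise write $f=a_{k}x^{k}+(\text{lower order terms})$ with $a_{k}\neq 0$ and $g=b_{n}x^{n}+(\text{lower order terms})$ with $b_{n}\neq 0$ and $n\geq k$. Since $\theta$ is an automorphism of $F$, $\theta^{n-k}(a_{k})\neq 0$, hence it is a unit of $F$; set $c=b_{n}\,\theta^{n-k}(a_{k})^{-1}$. Using the multiplication rule $(ax^{i})(bx^{j})=a\theta^{i}(b)x^{i+j}$, the product $(cx^{n-k})f$ has leading term $c\,\theta^{n-k}(a_{k})\,x^{n}=b_{n}x^{n}$, so $g':=g-(cx^{n-k})f$ has degree strictly less than $n$. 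By the induction hypothesis there exist $q'$ and $r'$ with $g'=q'f+r'$ and $\deg(r')<\deg(f)$; then $q=cx^{n-k}+q'$ and $r=r'$ satisfy $g=qf+r$ with $\deg(r)<\deg(f)$.

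For uniqueness, suppose $g=qf+r=\tilde{q}f+\tilde{r}$ with $\deg(r),\deg(\tilde{r})<\deg(f)$. Subtracting gives $(q-\tilde{q})f=\tilde{r}-r$. If $q\neq\tilde{q}$, then using the facts already recorded for $F[x;\theta]$ — no nonzero zero-divisors and $\deg(fg)=\deg(f)+\deg(g)$ — the left side has degree $\deg(q-\tilde{q})+\deg(f)\geq\deg(f)$, whereas the right side has degree $<\deg(f)$ by the degree-of-a-sum bound. This contradiction forces $q=\tilde{q}$, and then $\tilde{r}-r=0$, so $r=\tilde{r}$.

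The main point requiring care — and the only place the skew structure intervenes — is the cancellation step: the quotient monomial must be $cx^{n-k}$ placed on the left of $f$ (not $x^{n-k}c$), and one must invoke the bijectivity of $\theta$ to know $\theta^{n-k}(a_{k})$ is invertible so that $c$ exists. Beyond that, the proof is the usual induction on degree, and no commutativity is used at any point.
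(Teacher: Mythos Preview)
Your proof is correct and is precisely the standard argument one finds in the literature. Note, however, that the paper does not actually give its own proof of this statement: the theorem is quoted from \cite{Mcdonald74} and left unproved, with only the remark that an analogous left division algorithm also holds. So there is nothing in the paper to compare your argument against; your write-up simply supplies the omitted details, and it does so correctly, including the one genuinely skew-specific point (placing the quotient monomial $cx^{n-k}$ on the left and using that $\theta^{n-k}(a_k)$ is invertible because $\theta$ is an automorphism).
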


The above result is called division on the right by $f.$ A similar result
can be proved regarding division on left by $f.$

Applying the division algorithm above one can easily prove the following
Theorem.

\begin{theorem}
\label{Theorem6}\cite{Jacobson43} $F[x;\theta ]$ is a non-commutative
principal left (right) ideal ring. Moreover any two sided ideal must be
generated by%
\begin{equation*}
f(x)=\left( a_{0}+a_{1}x^{m}+a_{2}x^{2m}+\ldots +a_{r}x^{rm}\right) x^{t},
\end{equation*}%
where $\left\vert \left\langle \theta \right\rangle \right\vert =m.$
\end{theorem}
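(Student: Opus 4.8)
The plan is to use the right division algorithm for the left-ideal statement, its left analogue for the right-ideal statement, and then to pin down the shape of a two-sided generator by multiplying a generator by elements of $F$.

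First I would show every left ideal is principal. Let $I$ be a nonzero left ideal of $F[x;\theta]$ and choose $f\in I\setminus\{0\}$ of minimal degree. For any $g\in I$, right division gives $g=qf+r$ with $\deg r<\deg f$; since $I$ is a left ideal, $qf\in I$ and hence $r=g-qf\in I$, so minimality of $\deg f$ forces $r=0$ and $g\in F[x;\theta]f$. Thus $I=F[x;\theta]f$ (the zero ideal being generated by $0$). Applying the same reasoning to division on the left shows every right ideal is of the form $fF[x;\theta]$, so $F[x;\theta]$ is a principal left and right ideal ring, which is non-commutative by the example above.

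The main point is the shape of a two-sided ideal $I$. Being a left ideal, $I=F[x;\theta]f$ for some $f=\sum_i b_i x^i$. Being also a right ideal, $fb\in I=F[x;\theta]f$ for every nonzero $b\in F$, so $fb=qf$; comparing degrees via $\deg(fg)=\deg f+\deg g$ forces $\deg q=0$, i.e. $q=e_b\in F^{*}$ (nonzero since $F[x;\theta]$ has no zero-divisors). Expanding with the twist rule, $fb=\sum_i b_i\theta^i(b)x^i$, and comparing with $e_b f=\sum_i e_b b_i x^i$ gives $\theta^i(b)=e_b$ for every $i$ with $b_i\neq 0$. Since this holds for all $b\in F$, any two exponents $i,j$ occurring in $f$ satisfy $\theta^{i-j}=\mathrm{id}_F$, i.e. $m\mid i-j$; hence all exponents of $f$ lie in a single residue class modulo $m$. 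Writing $t$ for the least such exponent, we obtain $f(x)=(a_0+a_1x^m+\cdots+a_rx^{rm})x^t$ with $a_0\neq 0$ and $a_r\neq 0$, as claimed; note $x^t$ factors off cleanly on the right since $(cx^{km})(x^t)=cx^{km+t}$, and that replacing $f$ by a unit multiple $uf$ with $u\in F^{*}$ does not change the exponents appearing, so the form is independent of the chosen generator.

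The two Euclidean-style arguments are routine once the division algorithms are granted; the only real content, and the step I expect to be the obstacle, is the last one — recognizing that multiplying the generator by arbitrary scalars $b\in F$ and reading the condition $m\mid i-j$ off the coefficient comparison is exactly what forces the "gaps of size $m$" structure together with the trailing factor $x^t$.
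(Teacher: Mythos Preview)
The paper does not actually prove this theorem; it is quoted from Jacobson's \emph{The Theory of Rings} with only the one-line remark that ``applying the division algorithm above one can easily prove'' it. So there is no in-paper proof to compare against.

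That said, your argument is correct and is essentially the standard one. The Euclidean minimal-degree argument for the principal left/right ideal claims is exactly what the paper's remark is pointing to. For the two-sided part, your trick of right-multiplying the left generator $f$ by arbitrary scalars $b\in F$, observing that $fb=e_bf$ with $e_b\in F^{*}$ by a degree count, and then reading off $\theta^{i}(b)=e_b$ for every exponent $i$ occurring in $f$ is precisely the classical computation that forces all exponents into a single residue class modulo $m=|\langle\theta\rangle|$. The factoring-off of $x^{t}$ on the right is clean because no coefficients are being moved past a power of $x$. One small point you might add for completeness: the left generator and the right generator of a two-sided ideal here agree up to a unit (if $I=F[x;\theta]f=gF[x;\theta]$ then $f=gh$ and $g=kf$ force $h,k\in F^{*}$ by degrees), so the stated form is independent of which side you view $f$ as generating from.
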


\begin{corollary}
\label{cor2}Let $\theta $ be an automorphism of \ $F$ with $\left\vert
\left\langle \theta \right\rangle \right\vert =m.$ Then $(x^{s}-1)$ is a two
sided ideal in $F[x;\theta ]$ iff $m|s$.
\end{corollary}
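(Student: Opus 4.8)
The plan is to deduce the statement from Theorem~\ref{Theorem6}, which characterizes two-sided ideals of $F[x;\theta]$ as those generated by a polynomial of the form $\left(a_{0}+a_{1}x^{m}+\cdots+a_{r}x^{rm}\right)x^{t}$. So the whole argument reduces to checking when the particular polynomial $x^{s}-1$ fits that template.

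First I would prove the ``if'' direction: assume $m\mid s$, say $s=mk$. Then $x^{s}-1 = -1 + x^{mk}$, which is visibly of the form $\left(a_{0}+a_{1}x^{m}+\cdots+a_{r}x^{rm}\right)x^{t}$ with $t=0$, $r=k$, $a_{0}=-1$, $a_{k}=1$, and all intermediate coefficients zero. Hence by Theorem~\ref{Theorem6} the left (equivalently right) ideal it generates is two-sided. Alternatively, and perhaps more transparently, one can argue directly: to show $(x^{s}-1)$ is two-sided it suffices to show $b(x^{s}-1) \in (x^{s}-1)$ for every scalar $b\in F$ and that $x(x^{s}-1)\in (x^{s}-1)$, since these generate the ring; and $b(x^{s}-1) = bx^{s} - b = \theta^{-s}(b)\,? $ --- more cleanly, $(x^{s}-1)b = x^{s}b - b = \theta^{s}(b)x^{s} - b$, and when $m\mid s$ we have $\theta^{s}=\mathrm{id}$, so this equals $bx^{s}-b = b(x^{s}-1)\in(x^{s}-1)$; also $(x^{s}-1)x = x^{s+1}-x = x(x^{s}-1)$ trivially. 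Thus the right ideal generated by $x^{s}-1$ is closed under right multiplication by all of $F[x;\theta]$, hence two-sided. I would present this direct computation as the main line, as it is self-contained and illuminating.

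For the ``only if'' direction, suppose $(x^{s}-1)$ is a two-sided ideal. Then for every $b\in F$ we need $(x^{s}-1)b$ to lie in the ideal, i.e. to be a left multiple of $x^{s}-1$. Compute $(x^{s}-1)b = \theta^{s}(b)x^{s} - b$. If this is $g(x)(x^{s}-1)$ for some $g$, then by comparing degrees $g$ must be a constant $c\in F$, giving $c(x^{s}-1) = cx^{s}-c$. Matching the $x^{s}$-coefficient forces $c=\theta^{s}(b)$ and matching the constant term forces $c=b$; hence $\theta^{s}(b)=b$ for all $b\in F$. Since $\theta$ has order $m$, $\theta^{s}=\mathrm{id}$ forces $m\mid s$. (Equivalently one can invoke Theorem~\ref{Theorem6}: writing $x^{s}-1$ in the required form $\left(a_{0}+\cdots+a_{r}x^{rm}\right)x^{t}$ and noting $x^{s}-1$ has nonzero constant term forces $t=0$, and then $x^{s}$ must appear as one of the allowed monomials $x^{rm}$, so $s$ is a multiple of $m$.)

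The argument is essentially a short computation once Theorem~\ref{Theorem6} is in hand, so I do not anticipate a serious obstacle; the only point requiring a little care is the degree/coefficient comparison in the ``only if'' direction, where one must be sure that a left multiple of $x^{s}-1$ equal to $(x^{s}-1)b$ is forced to have a constant cofactor — this follows immediately from the degree formula $\deg(fg)=\deg(f)+\deg(g)$ listed among the basic facts about $F[x;\theta]$. I would also remark that the same reasoning with left multiplication in place of right, using $b(x^{s}-1)$, gives the statement symmetrically, consistent with $F[x;\theta]$ being both a left and right principal ideal ring.
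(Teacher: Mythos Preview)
The paper states Corollary~\ref{cor2} without proof, as an immediate consequence of Theorem~\ref{Theorem6}. Your argument is correct and in line with that intent. One remark worth making explicit: Theorem~\ref{Theorem6} as quoted only asserts the \emph{necessary} form of a generator of a two-sided ideal, not that every polynomial of that form generates one; so strictly it covers the ``only if'' direction (and your coefficient-matching argument, forcing $\theta^{s}(b)=b$ for all $b$, makes that deduction precise). For the ``if'' direction your direct verification that $(x^{s}-1)b=b(x^{s}-1)$ and $(x^{s}-1)x=x(x^{s}-1)$ whenever $m\mid s$ is exactly what is needed---this is the centrality statement the paper records immediately afterwards as Lemma~\ref{Lemma7}.
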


\begin{lemma}
\label{Lemma7} $(x^{s}-1)\in Z\left( F[x;\theta ]\right) $ for $m|s$, where $%
Z\left( F[x;\theta ]\right) $ is the center of $F[x;\theta ].$
\end{lemma}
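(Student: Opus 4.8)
The plan is to show that $x^s - 1$ commutes with every element of $F[x;\theta]$ whenever $m \mid s$. Since the center consists precisely of those elements, and since $F[x;\theta]$ is generated as a ring by $F$ together with $x$, it suffices to verify that $x^s - 1$ commutes with every scalar $a \in F$ and with $x$ itself. Commutativity with $x$ is immediate because powers of $x$ commute with one another: $x(x^s - 1) = x^{s+1} - x = (x^s-1)x$. So the entire content is the interaction with scalars.

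First I would compute $(x^s - 1)a$ and $a(x^s - 1)$ separately using the multiplication rule $(ux^i)(vx^j) = u\,\theta^i(v)\,x^{i+j}$. On the one hand, $a(x^s - 1) = a x^s - a$. On the other hand, $(x^s)(a) = \theta^s(a)x^s$, so $(x^s - 1)a = \theta^s(a)x^s - a$. Thus the two products agree for all $a \in F$ if and only if $\theta^s(a) = a$ for all $a \in F$, i.e. if and only if $\theta^s = \mathrm{id}$ on $F$. Since $\theta$ has order $m$, this holds exactly when $m \mid s$, which is our hypothesis. Hence $x^s - 1$ commutes with all scalars.

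Combining the two checks, $x^s - 1$ commutes with a generating set of $F[x;\theta]$ as a ring, and therefore with all of $F[x;\theta]$ by distributivity (an element commuting with $x$ and with every scalar commutes with every monomial $a x^i$, hence with every polynomial). This places $x^s - 1$ in $Z(F[x;\theta])$, as claimed. One could alternatively deduce this directly from Corollary~\ref{cor2} and Theorem~\ref{Theorem6}: a two-sided principal ideal generated by a polynomial of the special form in Theorem~\ref{Theorem6} need not have a central generator in general, so the direct scalar computation is the cleaner route and is what I would present.

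There is no real obstacle here; the only point requiring a moment's care is recognizing that it is enough to test commutativity against the ring generators $x$ and $F$ rather than against an arbitrary polynomial, and then correctly tracking where the twist $\theta^s$ enters. Once the identity $(x^s)(a) = \theta^s(a) x^s$ is written down, the equivalence with $m \mid s$ is transparent.
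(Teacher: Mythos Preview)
Your proof is correct and follows essentially the same approach as the paper: both hinge on the observation that $m\mid s$ forces $\theta^{s}(a)=a$ for all $a\in F$, so that $x^{s}a=ax^{s}$. The only cosmetic difference is that the paper verifies $(x^{s}-1)f(x)=f(x)(x^{s}-1)$ for an arbitrary polynomial $f$ directly (term by term), whereas you first reduce to the ring generators $x$ and $F$ and then check those; the underlying computation is identical.
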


\begin{proof}
Let $f(x)=a_{0}+a_{1}x+\cdots +a_{r}x^{r}\in F[x;\theta ]$. Since $m|s,$
then $\theta ^{s}(a)=a$ for any $a\in F.$ Hence,%
\begin{eqnarray*}
\left( x^{s}-1\right) f(x) &=&\left( x^{s}-1\right) \left(
a_{0}+a_{1}x+\cdots +a_{r}x^{r}\right) \\
&=&a_{0}x^{s}+a_{1}x^{s+1}+\cdots +a_{r}x^{s+r} \\
&&-a_{0}-a_{1}x+\cdots -a_{r}x^{r} \\
&=&\left( a_{0}+a_{1}x+\cdots +a_{r}x^{r}\right) \left( x^{s}-1\right) \\
&=&f(x)\left( x^{s}-1\right) .
\end{eqnarray*}
\end{proof}

\begin{lemma}
\label{Lemma8}\cite{Boucher07pri} If $g\cdot h\in Z\left( F[x;\theta
]\right) $ then $g\cdot h=h\cdot g$
\end{lemma}

From Theorem \ref{Theorem6}, Lemma \ref{Lemma7}, and Lemma \ref{Lemma8}, we
may conclude that the factors of $x^{s}-1$ commute. Thus if $f$ is a left
divisor then it is a right divisor as well. This fact will help in reducing
the complexity of factoring $x^{s}-1$ in $F[x;\theta ].$ From now on we will
say divisors or factors of $x^{s}-1$ without specifying left or right.

\begin{definition}
A polynomial $f$ is called a left multiple of a polynomial $d$ (in this case
$d$ will be called a right divisor of $f$) if there exits a polynomial $g$
such that%
\begin{equation*}
f=gd.
\end{equation*}
\end{definition}

\begin{definition}
A monic polynomial $d$ is called the greatest common right divisor (gcrd) of
$f$ and $g$ if

\begin{enumerate}
\item $d$ is a right divisor of $f$ and $g,$ and

\item If $e$ is another right divisor of $f$ and $g$ then $d=ke$ for some
polynomial $k.$
\end{enumerate}
\end{definition}

The greatest common left divisor (gcld) of $a$ and $b$ is a monic polynomial
defined in a similar way. Similarly we define the least common right
multiple of $a$ and $b$ lcrm$[a,b]$ and the least common left multiple of $a$
and $b,$ lclm$[a,b]$

\begin{theorem}
\label{Theorem3}\cite{Ore33} gcrd, gcld, lcrm, and lclm can be calculated
using the left and right division algorithms.
\end{theorem}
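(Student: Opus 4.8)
The statement to prove is Theorem \ref{Theorem3}: that gcrd, gcld, lcrm, and lclm all exist and can be computed via the left and right division algorithms. The plan is to model the argument on the classical Euclidean algorithm, being careful to respect the side on which division is performed, since $F[x;\theta]$ is non-commutative. I would first treat the gcrd of $f$ and $g$. Assume without loss of generality $\deg(f)\le\deg(g)$. Apply the right division algorithm (the stated Theorem: for $f\neq 0$ there are unique $q,r$ with $g=qf+r$ and $\deg(r)<\deg(f)$) to get $g=q_1 f+r_1$. Then observe the key invariant: any common right divisor of $f$ and $g$ is a right divisor of $r_1=g-q_1 f$, and conversely any common right divisor of $f$ and $r_1$ is a right divisor of $g$; hence the set of common right divisors of $\{f,g\}$ equals that of $\{f,r_1\}$. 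Iterating $f=q_2 r_1+r_2$, etc., the degrees of the remainders strictly decrease, so the process terminates with some $r_{k+1}=0$. Then $r_k$ is a right divisor of everything up the chain, in particular of both $f$ and $g$, and by the invariant any other common right divisor divides $r_k$; rescaling $r_k$ to be monic gives the gcrd. (Here I use freely that $F[x;\theta]$ has no zero divisors and that $\deg(ab)=\deg a+\deg b$, both listed in the excerpt, so that ``divisor'' behaves as expected and the monic normalization is unambiguous.)

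The gcld case is symmetric: one invokes the left division algorithm (mentioned in the excerpt as the analogue of the right one, giving $g=f q+r$) and runs the same Euclidean chain on the left, with ``common left divisor'' as the preserved set. For the least common left/right multiples I would use the extended version of the algorithm. Running the gcrd computation while also tracking the cofactors, i.e.\ writing each remainder as a combination $r_i = u_i f + v_i g$ with $u_i,v_i\in F[x;\theta]$, produces at the penultimate step an identity whose bookkeeping yields the lclm: concretely, from $r_{k+1}=0$ one reads off a relation $a f = b g$ of minimal degree, and $af$ ($=bg$) is then the lclm$[f,g]$ up to a unit of $F$; one argues minimality by noting any common left multiple $cf=dg$ must, upon right-dividing by this relation, produce a smaller one unless the remainder vanishes. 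The lcrm is obtained symmetrically from the extended left-division chain.

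The main obstacle is purely notational rather than conceptual: one must be scrupulous about which side the quotient sits on at every stage, because ``$r$ is a right divisor of $g$ and of $q_1 f$'' only lets us conclude ``$r$ is a right divisor of $g-q_1f$'' — the cancellation $g - q_1 f$ happens on the correct side automatically, but if one accidentally used left division in one step and right division in another the invariant would break. A secondary point needing a line of care is the extended (cofactor-tracking) recursion for the lcm's: one has to substitute $r_{i-1}=u_{i-1}f+v_{i-1}g$ and $r_{i-2}=u_{i-2}f+v_{i-2}g$ into $r_i = r_{i-2} - q_i r_{i-1}$ and collect, and the collecting must again keep the $q_i$ on the outside/left so that the resulting $u_i,v_i$ are genuine ring elements in the right order. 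Once the side-conventions are fixed, termination (strictly decreasing remainder degrees) and correctness (the invariant) are immediate, so I would present the gcrd argument in full detail and then simply remark that the other three are proved by the obvious left/right and multiplicative duals.
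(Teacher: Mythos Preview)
Your argument is the standard Euclidean-algorithm proof and is correct in outline; the invariant for the gcrd chain is stated properly, and the extended algorithm does indeed produce the lclm from the first vanishing remainder (though your minimality argument for the lclm is a bit sketchy and would benefit from the explicit degree count $\deg(\mathrm{lclm}[f,g])=\deg f+\deg g-\deg(\mathrm{gcrd}(f,g))$, which falls out of the cofactor recursion).

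However, there is nothing to compare against: the paper does not prove this theorem at all. It is stated with a citation to Ore's 1933 paper \cite{Ore33} and left without proof, as a known result imported from the literature. Your write-up is essentially the argument one finds in Ore or in Jacobson \cite{Jacobson43}, so in that sense it matches the ``intended'' proof, but the present paper simply quotes the result.
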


\section{Skew Quasi Cyclic Codes}

\begin{definition}
Let $F$ be a finite field of characteristic $p$ with $q=p^{mt}$ elements,
and let $\theta $ be an automorphism of \ $F$ with $\left\vert \left\langle
\theta \right\rangle \right\vert =m.$ A subset $C$ of $F^{n}$ is called a
skew quasi-cyclic code of length $n$ where $n=sl,$ $m|s,$ and index $l$ if

\begin{enumerate}
\item $C$ is a subspace of $F^{n}.$

\item If
\begin{equation*}
c=\left(
\begin{array}{c}
c_{0,0},c_{0,1},\ldots ,c_{0,l-1},c_{1,0},c_{1,1},\ldots ,c_{1,l-1},\ldots ,
\\
c_{s-1,0},c_{s-1,1},\ldots ,c_{s-1,l-1}%
\end{array}
\right) \in C
\end{equation*}
then
\begin{equation*}
T_{\theta,s,l}(c)=\left(
\begin{array}{c}
\theta (c_{s-1,0}),\theta (c_{s-1,1}),\ldots ,\theta (c_{s-1,l-1}),\theta
(c_{0,0}), \\
\ldots ,\theta (c_{0,l-1}),\ldots ,\theta (c_{s-2,0}),\ldots ,\theta
(c_{s-2,l-1})%
\end{array}
\right) \in C.
\end{equation*}
\end{enumerate}
\end{definition}

The map $T_{\theta,s,l}$ will be referred to as skew cyclic shift
operator. Thus skew QC codes are linear codes that are closed
under skew cyclic shift. If $\theta $ is the identity map, then
skew QC codes are just the standard QC codes defined over $F.$

In \cite{Boucher07}, Boucher, etc. studied skew cyclic codes over $F.$ They
showed that a code $C$ is a skew cyclic code if and only if $C$ is a left
ideal generated by $g(x)$ where $g(x)$ is a right divisor of $x^{n}-1.$

Recall from Corollary \ref{cor2} that $x^{s}-1$ is a two sided ideal iff $%
m|s.$ Because of this, we will always assume that $C$ is a skew quasi-cyclic
code of length $n$ where $n=sl,$ $m|s,$ and index $l.$

In this paper we focus on skew QC codes over the finite field $F=GF(4)$ even
though most results can be generalized to any finite field.

The ring $R_{s}^{l}=\left( F[x;\theta ]/(x^{s}-1\right) )^{l}$ is a left $%
R_{s}=F[x;\theta ]/(x^{s}-1)$ module where we define multiplication from
left by
\begin{eqnarray*}
&&f(x)\left( g_{1}(x),g_{2}(x),\ldots ,g_{l}(x)\right) \\
&=&\left( f(x)g_{1}(x),f(x)g_{2}(x),\ldots ,f(x)g_{l}(x)\right) .
\end{eqnarray*}

Let $c=\left(
\begin{array}{c}
c_{0,0},c_{0,1},\ldots ,c_{0,s-1},c_{1,0},c_{1,1},\ldots ,c_{1,l-1}, \\
\ldots ,c_{l-1,0},c_{l-1,1},\ldots ,c_{l-1,s-1}%
\end{array}%
\right)$ be an element in $F^{sl}$. Define a map $\phi :F^{sl}\rightarrow
R_{s}^{l}$ by
\begin{eqnarray*}
\phi(c)=(c_{0}(x),c_{1}(x),\ldots, c_{l-1}(x))
\end{eqnarray*}%
where
\begin{equation*}
c_{j}(x)=\sum_{i=0}^{s-1}c_{i,j}x^{i}\in F[x;\theta ]/(x^{s}-1)\text{ for }%
j=0,1,\ldots ,l-1.
\end{equation*}%
The map $\phi $ gives a one to one correspondence between the ring $F^{sl}$
and the ring $R_{s}^{l}.$ It is also a vector space isomorphism between $%
F^{sl}$ and $R_{s}^{l}$, when considered as vector spaces over $F$.

\begin{theorem}
A subset $C$ of $F^{n}$ is a skew QC code of length $n=sl$ and index $l$ if
and only if $\phi \left( C\right) $ is a left submodule of the ring $%
R_{s}^{l}.$
\end{theorem}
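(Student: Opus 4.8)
The plan is to translate the combinatorial description of the skew cyclic shift $T_{\theta,s,l}$ into an algebraic statement about multiplication by $x$ in $R_s^l$, and then observe that closure under this single operation is equivalent to closure under all left multiplications by elements of $R_s$. First I would compute $\phi(T_{\theta,s,l}(c))$ for an arbitrary $c\in F^n$. Writing $\phi(c)=(c_0(x),\dots,c_{l-1}(x))$ with $c_j(x)=\sum_{i=0}^{s-1}c_{i,j}x^i$, I would check that the $j$-th component of $\phi(T_{\theta,s,l}(c))$ is exactly $x\cdot c_j(x)$ computed in $R_s=F[x;\theta]/(x^s-1)$. The key point in this verification: multiplying on the left by $x$ sends $c_{i,j}x^i$ to $\theta(c_{i,j})x^{i+1}$ (by the multiplication rule of $F[x;\theta]$), and reducing modulo $x^s-1$ — which is legitimate because $x^s-1$ is two-sided, hence $R_s$ is a genuine ring, by Corollary~\ref{cor2} and the hypothesis $m\mid s$ — wraps the top term $\theta(c_{s-1,j})x^s$ around to $\theta(c_{s-1,j})$. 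This is precisely the component pattern $\bigl(\theta(c_{s-1,j}),\theta(c_{0,j}),\dots,\theta(c_{s-2,j})\bigr)$ appearing in the definition of $T_{\theta,s,l}$. So $\phi\circ T_{\theta,s,l}=(x\cdot-)\circ\phi$, i.e. $\phi$ intertwines the skew cyclic shift with left multiplication by $x$ on $R_s^l$.

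Granting that identity, the forward direction is routine: if $C$ is a skew QC code then $\phi(C)$ is an $F$-subspace of $R_s^l$ (since $\phi$ is an $F$-linear isomorphism) and it is stable under multiplication by $x$ (by the intertwining identity and closure of $C$ under $T_{\theta,s,l}$). Because every element of $R_s$ is an $F$-linear combination of the powers $1,x,\dots,x^{s-1}$, stability under multiplication by $x$ together with $F$-linearity forces stability under multiplication by every element of $R_s$; hence $\phi(C)$ is a left $R_s$-submodule of $R_s^l$. The converse runs the same argument backwards: a left submodule is in particular an $F$-subspace closed under multiplication by $x$, so $\phi^{-1}$ of it is an $F$-subspace of $F^n$ closed under $T_{\theta,s,l}$, i.e. a skew QC code.

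The only genuinely delicate point is the very first computation, and within it the step where one reduces $\theta(c_{s-1,j})x^s$ modulo $x^s-1$ to $\theta(c_{s-1,j})$: this quietly uses that $x^s-1$ is central (Lemma~\ref{Lemma7}), so that the quotient $R_s$ is well-defined as a ring and left multiplication by $x$ is an honest operation on it — without $m\mid s$ the object $R_s^l$ would not even make sense as stated. Everything else (the one-to-one correspondence and $F$-linearity of $\phi$, and the spanning of $R_s$ by powers of $x$) has already been recorded in the text, so I would cite it rather than reprove it. I expect the write-up to be short, with the bulk of the ink spent on displaying the index bookkeeping in $\phi(T_{\theta,s,l}(c))_j = x\,c_j(x)$ carefully.
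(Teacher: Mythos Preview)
Your proposal is correct and follows essentially the same approach as the paper: both establish the intertwining identity $\phi\bigl(T_{\theta,s,l}(c)\bigr)=x\cdot\phi(c)$ via the explicit component computation, then use $F$-linearity of $\phi$ together with the fact that $R_s$ is $F$-spanned by powers of $x$ to pass between closure under $T_{\theta,s,l}$ and closure under left $R_s$-multiplication. Your write-up is slightly more structured in isolating the intertwining identity first and then invoking it symmetrically in both directions, whereas the paper performs the computation inside the forward direction and refers back to it for the converse, but this is organizational rather than mathematical.
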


\begin{proof}
Let $C$ be a skew QC code of index $l$ over $F.$ We claim that $\phi (C)$
forms a submodule of $R_{s}^{l}$ where $\phi $ is the map defined above.
Clearly, $\phi (C)$ is closed under addition and scalar multiplication (by
elements of $F$). Let
\begin{equation*}
\phi (c)=(c_{0}(x),c_{1}(x),\ldots ,c_{l-1}(x))\in \phi (C).
\end{equation*}
for

\begin{equation*}
c=\left(
\begin{array}{c}
c_{0,0},c_{0,1},\ldots ,c_{0,s-1},c_{1,0},c_{1,1},\ldots , \\
c_{1,s-1},\ldots ,c_{l-1,0},c_{l-1,1},\ldots ,c_{l-1,s-1}%
\end{array}%
\right) \in C
\end{equation*}%
.

Then
\begin{eqnarray*}
x\phi (c) &=&(xc_{0}(x),xc_{1}(x),\ldots ,xc_{l-1}(x)) \\
&=&\left(
\begin{array}{c}
\theta \left( c_{s-1,0}\right) +\theta \left( c_{0,0}\right) x+\cdots + \\
\theta \left( c_{s-2,0}\right) x^{s-1},\theta \left( c_{s-1,1}\right)
+\theta \left( c_{0,1}\right) x+\cdots \\
+\theta \left( c_{s-1,1}\right) x^{s-1},\ldots ,\theta \left(
c_{s-1,l-1}\right) + \\
\theta \left( c_{0,l-1}\right) x+\cdots +\theta \left( c_{s-1,l-1}\right)
x^{s-1}%
\end{array}%
\right) \\
&=&\phi \left(
\begin{array}{c}
\theta \left( c_{s-1,0}\right) ,\theta \left( c_{s-1,1}\right) ,\ldots , \\
\theta \left( c_{s-1,l-1}\right) ,\theta \left( c_{0,0}\right) ,\theta
\left( c_{0,1}\right) ,\ldots , \\
\theta \left( c_{0,l-1}\right) ,\ldots ,\theta \left( c_{s-2,0}\right) , \\
\theta \left( c_{s-2,1}\right) ,\ldots ,\theta \left( c_{s-2,l-1}\right)%
\end{array}%
\right) \in \phi (C).
\end{eqnarray*}

Then, by linearity it follows that $p(x)\phi(c)\in \phi (C)$ for any $%
p(x)\in R_s$. Hence $\phi (C)$ is a left submodule of $R_{s}^{l}.$

Conversely, suppose $D$ is an $R_{s}$ left submodule of $R_{s}^{l}.$ Let $%
C=\phi^{-1}(D)=\{c\in F^n: \phi(c)\in D \}$. We claim that $C$ is
a skew QC code over $F$. Since $\phi$ is a vector space
isomorphism, $C$ is a linear code of length $n$ over $F$. To show
that $C$ is closed under skew cyclic shift, let $c =\left(
c_{0,0},c_{0,1},\ldots ,c_{0,s-1},c_{1,0},c_{1,1}, \newline \ldots
,c_{1,s-1},\ldots ,c_{l-1,0},c_{l-1,1},\ldots ,c_{l-1,s-1}
\right)\in C$. Then, $\phi(c)= \left( g_{0}(x),g_{1}(x),\ldots,
g_{l-1}(x)\right) \in D$, where
$g_{j}(x)=\sum_{i=0}^{s-1}c_{i,j}x^{i} \text{ for } j=0,1,\ldots
,l-1$. From the above discussion, it is easy to see that
$\phi\left( T_{\theta,s,l}(c) \right)=x\left(
g_{0}(x),g_{1}(x),\ldots g_{l-1}(x)\right)=\left(
xg_{0}(x),xg_{1}(x),\ldots, xg_{l-1}(x)\right) \in D$. Hence
$T_{\theta,s,l}(c)\in C$. Therefore $C$ is a skew quasi-cyclic
code $C$.
\end{proof}

From now on we concentrate on 1-generator skew QC codes that are cyclic left
submodules of $R_{s}^{l}.$ i.e. $\ $ any skew QC code $C$ that has the form
\begin{equation*}
C=\left\{ f(x)\left( g_{1}(x),g_{2}(x),\ldots ,g_{l}(x)\right) :\text{ }%
f(x)\in R_{s}\text{ }\right\} .
\end{equation*}%
Sometimes we denote this by
\begin{equation*}
C=\left\{
\begin{array}{c}
f(x)G(x):\text{ }f(x)\in R_{s}\text{ \ and } \\
G(x)=\left( g_{1}(x),g_{2}(x),\ldots ,g_{l}(x)\right)%
\end{array}%
\right\} .
\end{equation*}

\begin{theorem}
Let $C$ be a one generator skew QC code of length $n=sl$ and index $l.$ Then
$C$ is generated by an element of the form
\begin{equation*}
\left( p_{1}(x)g_{1}(x),p_{2}(x)g_{2}(x),\ldots ,p_{l}(x)g_{l}(x)\right)
\end{equation*}%
where $g_{i}(x)$ is a divisor of $\left( x^{s}-1\right) .$
\end{theorem}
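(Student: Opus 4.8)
The plan is to recover the stated normal form by projecting $C$ onto each of its $l$ coordinates and feeding the result into the principal-ideal structure of $F[x;\theta]$ established earlier. For $i=1,\dots,l$, let $\pi_i\colon R_s^l\to R_s$ be the projection onto the $i$-th component. Since the $R_s$-action on $R_s^l$ is coordinatewise, each $\pi_i$ is a homomorphism of left $R_s$-modules, so $\pi_i(C)$ is a left submodule of $R_s=F[x;\theta]/(x^s-1)$, i.e.\ a left ideal of $R_s$.

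The next step is to identify each $\pi_i(C)$. Because $m\mid s$, Corollary~\ref{cor2} and Lemma~\ref{Lemma7} guarantee that $(x^s-1)$ is a genuine two-sided ideal, so the usual correspondence applies: left ideals of $R_s$ are exactly $J/(x^s-1)$ for left ideals $J$ of $F[x;\theta]$ with $(x^s-1)\subseteq J$. By Theorem~\ref{Theorem6}, $F[x;\theta]$ is a principal left ideal ring, so $J=F[x;\theta]\,g_i$ for some polynomial $g_i$, and since the units of $F[x;\theta]$ are the units of $F$ we may rescale $g_i$ to be monic. From $x^s-1\in J$ the right division algorithm gives $x^s-1=q g_i+r$ with $r\in J$ and $\deg r<\deg g_i$, and uniqueness forces $r=0$; hence $g_i$ is a right divisor of $x^s-1$, which (by the remark following Lemma~\ref{Lemma8} that divisors of $x^s-1$ need not carry a side label) is just a divisor of $x^s-1$. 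Thus $\pi_i(C)=\langle g_i(x)\rangle$ with $g_i(x)\mid x^s-1$ and $g_i$ monic, of degree at most $s$.

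Finally I would use that $C$ is one-generator. Write $C=\{f(x)G(x):f(x)\in R_s\}$ with $G(x)=(G_1(x),\dots,G_l(x))$. Applying the $R_s$-linear surjection $\pi_i$ gives $\langle g_i(x)\rangle=\pi_i(C)=R_s\,G_i(x)$, so in particular $G_i(x)\in\langle g_i(x)\rangle$, i.e.\ $G_i(x)=p_i(x)g_i(x)$ for some $p_i(x)\in R_s$. Hence the generator $G(x)$ itself already has the asserted shape $(p_1(x)g_1(x),\dots,p_l(x)g_l(x))$ with each $g_i$ a divisor of $x^s-1$, proving the theorem.

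The only delicate point I anticipate is the middle step: checking carefully that a left ideal of the quotient $R_s$ lifts to a principal left ideal of $F[x;\theta]$ whose monic generator really is a one-sided (hence, by the earlier remark, two-sided) divisor of $x^s-1$. Everything else is the classical quasi-cyclic argument transported to the skew setting, and that transport is legitimate precisely because $x^s-1$ is central when $m\mid s$.
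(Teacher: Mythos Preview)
Your argument is correct and follows essentially the same route as the paper: project the one-generator code onto each coordinate, observe that the image is a left ideal of $R_s$ (a skew cyclic code) and hence principal with generator $g_i$ dividing $x^s-1$, and conclude that the $i$-th component of the original generator lies in $\langle g_i\rangle$. The only difference is cosmetic: the paper invokes the Boucher--Geiselmann--Ulmer result on skew cyclic codes as a black box for the middle step, whereas you unpack it from the principal left ideal ring structure of $F[x;\theta]$ and the centrality of $x^s-1$.
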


\begin{proof}
Let $C$ be a 1-generator skew QC code generated by $\left(
f_{1},~f_{2},\ldots ,~f_{l}\right) $. For all $1\leq i\leq l$ define the
following map%
\begin{eqnarray*}
\Pi _{i} &:&C\rightarrow R_{s}\text{ by} \\
\Pi _{i}\left( \left( kf_{1},kf_{2},\ldots ,kf_{l}\right) \right) &=&kf_{i}.
\end{eqnarray*}%
The function $\Pi _{i}$ is a \ module homomorphism. It is clear that the
image of $\Pi _{i}$ is a left ideal and thus is a skew cyclic code in $%
R_{s}. $ Therefore, $kf_{i}\in \Pi _{i}(C)=$ $\left( g_{i}\right) $ for all $%
i=1,2,\ldots ,l.$ Hence,%
\begin{equation*}
C=\left( p_{1}(x)g_{1}(x),p_{2}(x)g_{2}(x),\ldots ,p_{l}(x)g_{l}(x)\right) ,
\end{equation*}%
where $g_{i}(x)$ is a divisor of $\left( x^{s}-1\right) .$
\end{proof}

\begin{definition}
Let
\begin{equation*}
C=\left( p_{1}(x)g_{1}(x),p_{2}(x)g_{2}(x),\ldots ,p_{l}(x)g_{l}(x)\right)
\end{equation*}%
be a skew QC code of length $n=sl$ and index $l.$ The unique monic polynomial%
\begin{equation*}
g(x)=gcld\left(
\begin{array}{c}
p_{1}(x)g_{1}(x),p_{2}(x)g_{2}(x), \\
\ldots ,p_{l}(x)g_{l}(x),x^{s}-1%
\end{array}%
\right)
\end{equation*}%
is called the generator polynomial of $C.$
\end{definition}

\begin{definition}
The monic polynomial $h(x)$ of minimal degree such that%
\begin{equation*}
h(x)\left(
\begin{array}{c}
p_{1}(x)g_{1}(x),p_{2}(x)g_{2}(x), \\
\ldots ,p_{l}(x)g_{l}(x)%
\end{array}%
\right) =\left( 0,0,\ldots ,0\right)
\end{equation*}%
is called the parity-check polynomial of $C$
\end{definition}

\begin{theorem}
\label{Lemma16}Suppose $d(x)=gcrd(f,g),$ then there are
polynomials polynomials $a(x),$ and $b(x)$ such that
\begin{equation*}
a(x)f(x)+b(x)g(x)=d(x).
\end{equation*}
\end{theorem}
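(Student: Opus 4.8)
The plan is to mimic the classical Bézout argument for the (commutative) Euclidean algorithm, but carried out with the \emph{right} division algorithm so that everything stays inside $F[x;\theta]$ and the divisibility is on the correct side. Concretely, I would run the right-division Euclidean algorithm on the pair $(f,g)$: set $r_{-1}=g$, $r_0=f$, and repeatedly write $r_{i-1}=q_{i+1}r_i+r_{i+1}$ with $\deg(r_{i+1})<\deg(r_i)$ via the Right Division Algorithm (Theorem stated above). Since degrees strictly decrease, the process terminates; let $r_N$ be the last nonzero remainder. By reversing the chain, $r_N$ is a common right divisor of $f$ and $g$, and any common right divisor of $f,g$ divides $r_N$ on the right — so after scaling $r_N$ to be monic it equals $d(x)=\mathrm{gcrd}(f,g)$ by the definition given in the paper (uniqueness of the monic gcrd). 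This is exactly the content of Theorem~\ref{Theorem3}, which asserts gcrd can be computed by the division algorithms; I would invoke that if I do not want to re-prove termination and the divisor properties from scratch.

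The second, and more delicate, step is to produce the cofactors $a(x),b(x)$ by back-substitution, taking care that \emph{multiplication is on the left}. Define left-coefficient sequences by $a_{-1}=0,\ b_{-1}=1$ (so $a_{-1}f+b_{-1}g=g=r_{-1}$) and $a_0=1,\ b_0=0$ (so $a_0f+b_0g=f=r_0$), and then, using the recursion $r_{i+1}=r_{i-1}-q_{i+1}r_i$, set
\begin{align*}
a_{i+1}&=a_{i-1}-q_{i+1}a_i, & b_{i+1}&=b_{i-1}-q_{i+1}b_i .
\end{align*}
An easy induction on $i$, substituting the recursion and using the \emph{left}-distributive law in $F[x;\theta]$, gives $a_i f + b_i g = r_i$ for all $i$. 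Taking $i=N$ yields $a_N f + b_N g = r_N$; finally multiply on the left by the scalar $u\in F^{*}$ that makes $ur_N=d(x)$ monic, and set $a=ua_N$, $b=ub_N$. This produces $a(x)f(x)+b(x)g(x)=d(x)$ as claimed.

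I expect the main obstacle to be purely bookkeeping about sides: one must be consistent that the quotients $q_{i+1}$ appear on the \emph{left} in the remainder recursion $r_{i+1}=r_{i-1}-q_{i+1}r_i$, that the cofactors are therefore placed on the \emph{left} in $a_i f + b_i g$, and that all manipulations use only left-distributivity (which holds in any ring) rather than any commutativity of $x$ with field elements. No step requires $\theta$ to be trivial or $s$-related hypotheses; the argument is valid in $F[x;\theta]$ verbatim. A remark worth inserting: because divisors of $x^s-1$ commute (as established after Lemma~\ref{Lemma8}), in the situations where this theorem is applied to factors of $x^s-1$ one does not even need to worry about left-vs-right, but the theorem as stated holds for arbitrary $f,g$ and the Euclidean/back-substitution proof above covers that general case.
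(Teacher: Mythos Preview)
Your proof is correct, but it takes a different route from the paper's. The paper argues abstractly via the principal left ideal ring structure (Theorem~\ref{Theorem6}): the left ideal $(f(x),g(x))$ is principal, say $(h(x))$; one checks that $h$ is a common right divisor of $f$ and $g$, that $d=gcrd(f,g)$ satisfies $(d)\subseteq (h)$ and conversely $f,g\in(d)$ (since $d$ right-divides both), hence $(h)\subseteq(d)$; thus $(d)=(f,g)$, which immediately gives $d=af+bg$ for suitable $a,b$. Your argument is instead the explicit extended Euclidean algorithm with left quotients and back-substitution. Both are valid. The paper's version is shorter and leans on ring structure already in hand; yours is constructive and actually produces $a(x),b(x)$, which is closer in spirit to the cited Theorem~\ref{Theorem3} and would be the route to take if one ever needed to compute the coefficients. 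Your care about keeping all quotients and cofactors on the left is exactly right and is the only subtlety in the noncommutative setting.
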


\begin{proof}
The proof is similar to the case of $\gcd (a,b)$ when the ring is
commutative. Suppose $d(x)=gcrd(f,g).$ Consider the left ideal generated by $%
(f(x),g(x)).$ Since $F_{q}[x;\theta ]$ is a principal left ideal ring, there
exists a polynomial $h(x)$ such that\ $(f(x),g(x))=(h(x)).$ Hence $%
f(x)=r_{1}(x)h(x)$ and $g(x)=r_{2}(x)h(x).$ But $d(x)=gcrd(f,g)$ implies
that $d(x)=k(x)h(x)$ and $\left( d(x)\right) \subseteq (h(x)).$ Since $d(x)=$
$gcrd(f,g)$ then $f(x)$ and $g(x)\in $left ideal $\left( d(x)\right) .$
Hence $(h(x))\subseteq (d(x)$ and we have $(d(x))=(f(x),g(x))=(h(x)).$
Therefore there are polynomials $a(x),$ and $b(x)$ such that%
\begin{equation*}
a(x)f(x)+b(x)g(x)=d(x).
\end{equation*}
\end{proof}

\begin{corollary}
\label{cor1}Suppose\ $d(x)=gcld(f,g),$ then there are two polynomials $a(x),$
and $b(x)$ such that%
\begin{equation*}
f(x)a(x)+g(x)b(x)=d(x).
\end{equation*}
\end{corollary}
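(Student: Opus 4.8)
The plan is to obtain Corollary~\ref{cor1} from Theorem~\ref{Lemma16} by a ``transpose'' or anti-automorphism argument rather than by repeating the ideal-theoretic proof on the right. The key observation is that the map $\theta$ induces a natural anti-isomorphism of $F[x;\theta]$ with itself: concretely, because $\theta$ has finite order $m$, one can consider the correspondence that reverses the order of multiplication while adjusting coefficients by powers of $\theta$, and this turns left divisibility into right divisibility, gcld into gcrd, and left-hand Bezout coefficients into right-hand ones. So the first step is to set up this anti-automorphism $\tau$ explicitly (or cite the fact, standard for Ore extensions, that $F[x;\theta]$ possesses such an anti-automorphism fixing $x$) and record that $\tau$ sends a relation $a(x)f(x)+b(x)g(x)=d(x)$ to a relation of the form $\tilde f(x)\tilde a(x)+\tilde g(x)\tilde b(x)=\tilde d(x)$.

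Alternatively — and this is the route I would actually write out, since it is self-contained — I would mimic the proof of Theorem~\ref{Lemma16} verbatim but with ``left'' and ``right'' interchanged. First I would observe that $F[x;\theta]$ is also a principal \emph{right} ideal ring (this is Theorem~\ref{Theorem6}, which asserts both). Then, given $d(x)=gcld(f,g)$, I would consider the right ideal generated by the pair, namely $f(x)F[x;\theta]+g(x)F[x;\theta]$; since this ring is a principal right ideal ring, that right ideal equals $h(x)F[x;\theta]$ for some monic $h(x)$, so $f(x)=h(x)r_1(x)$ and $g(x)=h(x)r_2(x)$, i.e.\ $h$ is a common left divisor of $f$ and $g$. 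Since $d=gcld(f,g)$ is the greatest such, $h(x)=d(x)k(x)$ for some $k(x)$, hence $f(x)F[x;\theta]+g(x)F[x;\theta]=h(x)F[x;\theta]\subseteq d(x)F[x;\theta]$. Conversely $d$ being a common left divisor of $f$ and $g$ gives $d(x)F[x;\theta]\subseteq f(x)F[x;\theta]+g(x)F[x;\theta]$. Therefore $d(x)\in f(x)F[x;\theta]+g(x)F[x;\theta]$, which is exactly the assertion that $d(x)=f(x)a(x)+g(x)b(x)$ for some $a(x),b(x)$.

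The one genuinely delicate point — the step I expect to be the main obstacle — is the interplay between ``greatest common \emph{left} divisor'' and \emph{right} ideals. One must be careful that the divisor $h(x)$ coming out of the principal-right-ideal structure really is a common \emph{left} divisor (it is: $f=h r_1$, $g=h r_2$ puts $h$ on the left), and that the defining property of the gcld (``$e$ a common left divisor $\Rightarrow$ $d=ek$'', read off from the left-hand version of the definition given in the text) is the one that forces $h(x)=d(x)k(x)$ and hence the ideal inclusion in the needed direction. Once the sidedness bookkeeping is pinned down, the argument is a routine mirror of the previous proof, and no computation with $\theta$ is required because the monic generator of a one-sided ideal in a principal ideal domain is unique. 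I would close by remarking that, exactly as before, one could instead have said: apply Theorem~\ref{Lemma16} to a suitable pair and transport the identity through the anti-automorphism of $F[x;\theta]$, which gives the same conclusion with no extra work.
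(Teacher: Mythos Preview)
Your mirror-argument approach is exactly what the paper intends: the corollary is stated without proof, as an immediate left/right dual of the preceding theorem, and your plan to rerun the argument with right ideals in place of left ideals is the right one. However, you commit precisely the sidedness slip you warned yourself about. The gcld property you correctly quote, ``$e$ a common left divisor $\Rightarrow d=ek$,'' applied with $e=h$ yields $d(x)=h(x)k(x)$, \emph{not} $h(x)=d(x)k(x)$ as you then write. Similarly, ``$d$ is a common left divisor of $f$ and $g$'' gives $f,g\in d(x)F[x;\theta]$ and hence $fF[x;\theta]+gF[x;\theta]\subseteq dF[x;\theta]$, the reverse of your stated inclusion. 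You have simply interchanged which premise yields which containment; the final equality of right ideals and the Bezout identity are unaffected. In fact, once you have $d=hk$ you are already finished: $d\in hF[x;\theta]=fF[x;\theta]+gF[x;\theta]$ is the desired expression, and the second inclusion is unnecessary.

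A word of caution on the anti-automorphism alternative you sketch first: the opposite ring of $F[x;\theta]$ is naturally isomorphic to $F[x;\theta^{-1}]$, not to $F[x;\theta]$ itself, so an anti-automorphism fixing $x$ and acting on $F$ through powers of $\theta$ exists only when $\theta=\theta^{-1}$, i.e.\ $|\langle\theta\rangle|\le 2$. That happens to hold for the Frobenius on $GF(4)$, but for the corollary in full generality you need the direct mirror argument rather than the transpose.
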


\begin{lemma}
\label{Lemma17} Let $g(x)$ and $h(x)$ be the generator and the parity-check
polynomials of a skew QC code $C.$ Then
\begin{equation*}
x^{s}-1=h(x)g(x)=g(x)h(x).
\end{equation*}
\end{lemma}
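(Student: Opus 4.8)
The plan is to prove the two equalities separately, starting with $x^s-1 = h(x)g(x)$ and then deducing the commuted version $g(x)h(x) = x^s-1$ from it using the fact (established after Lemma~\ref{Lemma8}) that the factors of $x^s-1$ commute. For the first equality, I would argue by mutual divisibility: show that $g(x)$ is a right divisor of $h(x)g(x)$ in the obvious way, that $h(x)g(x)$ is a right divisor of $x^s-1$, and that $\deg(h(x)g(x)) \ge \deg(x^s-1) = s$, so the two monic polynomials coincide.

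First I would show $g(x) \mid (x^s-1)$ on the right, which is immediate from the definition of $g(x)$ as the gcld of the components of the generator together with $x^s-1$: write $x^s-1 = g(x)\,k(x)$ for some $k(x)$. Next, using Corollary~\ref{cor1}, write $g(x) = \sum_i p_i(x)g_i(x)\,a_i(x) + (x^s-1)b(x)$ for suitable $a_i(x), b(x)$ (the gcld of several polynomials being obtained by iterating the two-variable case). Reducing modulo $x^s-1$ in $R_s^l$, this says that $g(x)\cdot G(x)$, regarded coordinatewise as $\big(g(x)\, \overline{p_i(x)g_i(x)}\big)$... hmm — more carefully, I want to show that $h(x)$, the minimal-degree monic polynomial annihilating $G(x) = (p_1g_1,\dots,p_lg_l)$ in $R_s^l$, satisfies $h(x)g(x) = x^s-1$. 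The key observation is: a polynomial $f(x)$ annihilates $G(x)$ in $R_s^l$ iff $(x^s-1) \mid f(x)p_i(x)g_i(x)$ for every $i$, iff $(x^s-1) \mid f(x)\cdot \gcld_i(p_ig_i, x^s-1)$... this requires care because of noncommutativity, so instead I would use that $(x^s-1)$ is central (Lemma~\ref{Lemma7}): $f(x)$ annihilates $G(x)$ iff the left ideal $(f(x))$ is contained in the left ideal $\big((x^s-1) : p_i g_i\big)$ for each $i$. Using Theorem~\ref{Lemma16}/Corollary~\ref{cor1} one identifies $\gcld\big(p_1g_1,\dots,p_lg_l, x^s-1\big) = g(x)$ and shows the colon ideal / annihilator is generated by the complementary factor $k(x)$ where $x^s-1 = g(x)k(x) = k(x)g(x)$. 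Hence $h(x) = k(x)$ and $h(x)g(x) = x^s-1$.

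The main obstacle is controlling the noncommutativity when passing between "divides every component" and "divides the gcld." I expect to lean heavily on centrality of $x^s-1$ (Lemma~\ref{Lemma7}) and on Lemma~\ref{Lemma8} to freely commute the two factors $g(x)$ and $h(x)$ once I know their product is $x^s-1$; that gives $g(x)h(x) = h(x)g(x) = x^s-1$ for the second equality at no extra cost. A secondary subtlety is the minimality/uniqueness of $h(x)$: I would note that if $f(x)$ also annihilates $G(x)$ then $(x^s-1) \mid f(x)g_i$ forces $k(x) \mid f(x)$ on the left (again using centrality to turn the divisibility conditions into a statement about the left ideal generated by $k(x)$), so the minimal-degree monic annihilator is exactly $k(x)$, which establishes both that $h = k$ and the degree bound $\deg h + \deg g = s$.
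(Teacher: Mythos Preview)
Your proposal is correct and follows essentially the same route as the paper: factor $x^{s}-1=g(x)k(x)$ from the gcld definition, observe (via $p_ig_i=g\alpha_i$ and $k g=x^s-1$) that $k$ annihilates $G$, then use the B\'ezout-type identity of Corollary~\ref{cor1} together with centrality of $x^s-1$ to force $h=k$, and finally invoke Lemma~\ref{Lemma8} for $g(x)h(x)=h(x)g(x)$. The only cosmetic difference is that the paper pins down $h=k$ by a two-sided degree comparison ($\deg k\ge \deg h$ from minimality of $h$, and $\deg h\ge \deg k$ from $(x^s-1)\mid h(x)g(x)$), whereas you phrase it as showing the annihilator left ideal equals $(k(x))$; these are the same argument in different clothing.
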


\begin{proof}
Since $g(x)=gcld\left(
\begin{array}{c}
p_{1}(x)g_{1}(x),p_{2}(x)g_{2}(x), \\
\ldots ,p_{l}(x)g_{l}(x),x^{s}-1%
\end{array}%
\right) ,$ then $x^{s}-1=g(x)k(x)$ for some polynomial $k(x).$ Note that by
Lemma \ref{Lemma8} we have $x^{s}-1=g(x)k(x)=k(x)g(x).$ Note also that $%
p_{i}(x)g_{i}(x)=g(x)\alpha _{i}(x)$ for all $i=1,\ldots ,l.$ Hence we have%
\begin{eqnarray*}
k(x)\left(
\begin{array}{c}
p_{1}(x)g_{1}(x),p_{2}(x)g_{2}(x), \\
\ldots ,p_{l}(x)g_{l}(x)%
\end{array}%
\right) &=& \\
k(x)\left(
\begin{array}{c}
g(x)\alpha _{1}(x),g(x)\alpha _{2}(x), \\
\ldots ,\alpha _{l}(x)g(x)\alpha _{l}(x)%
\end{array}%
\right) &=&\left( 0,0,\ldots ,0\right).
\end{eqnarray*}%
Hence $k(x)=q(x)h(x)$ and $\deg k(x)\geq \deg h(x).$ Now by Corollary \ref%
{cor1}, there are polynomials $a_{i}(x)$ such that%
\begin{eqnarray*}
&&p_{1}(x)g_{1}(x)a_{1}(x)+p_{2}(x)g_{2}(x)a_{2}(x)+\ldots \\
&&+(x^{s}-1)a_{l+1}(x) \\
&=&g(x).
\end{eqnarray*}%
Hence,%
\begin{eqnarray*}
&&h(x)p_{1}(x)g_{1}(x)a_{1}(x)+h(x)p_{2}(x)g_{2}(x)a_{2}(x) \\
&&+\ldots +h(x)(x^{s}-1)a_{l+1}(x) \\
&=&h(x)g(x) \\
0 &=&h(x)g(x).
\end{eqnarray*}%
This implies that $\deg h(x)\geq \deg \dfrac{x^{s}-1}{g(x)}=\deg k(x).$
Therefore $k(x)=h(x).$
\end{proof}

\begin{definition}
Let $C=\langle G(x)\rangle$ be a skew QC code. The annihilator of $C$ is the
set%
\begin{equation*}
I=\left\{ r(x):r(x)F(x)=0\text{ for all }F(x)\in C\right\} .
\end{equation*}%
It is clear that $I$ is a left ideal in $R_{s}.$
\end{definition}

\begin{lemma}
\label{Lemma 19} Let $C=G(x)$ be a skew quasi-cyclic code with annihilator $%
I.$ Then $I=\left( h(x)\right) $ and
\begin{eqnarray*}
C &\cong &R_{s}/I,\text{ and} \\
\dim C &=&\deg h(x).
\end{eqnarray*}
\end{lemma}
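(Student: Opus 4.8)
The plan is to build the isomorphism $C \cong R_s / I$ via a natural surjection and then identify its kernel with $I = (h(x))$. First I would define the evaluation-style map $\psi : R_s \to C$ by $\psi(f(x)) = f(x) G(x)$, where $G(x) = (p_1(x)g_1(x), \ldots, p_l(x)g_l(x))$. Since $C = \langle G(x) \rangle$ is by definition the cyclic left submodule generated by $G(x)$, this map is a surjective homomorphism of left $R_s$-modules. By the first isomorphism theorem for modules, $C \cong R_s / \ker \psi$, so the whole task reduces to showing $\ker \psi = I$ and $\deg h(x) = \dim_F C$.

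Next I would show $\ker \psi = I$. An element $r(x) \in \ker \psi$ means $r(x) G(x) = (0, \ldots, 0)$, which is precisely the defining condition for $r(x)$ to lie in the annihilator $I$; so $\ker \psi = I$ is immediate from the definitions. It then remains to show $I = (h(x))$. Since $R_s = F[x;\theta]/(x^s-1)$ is (the quotient of) a principal left ideal ring and $I$ is a left ideal (as already noted in the definition of the annihilator), we have $I = (t(x))$ for some monic $t(x)$, which we may take to be a right divisor of $x^s - 1$ — using Theorem \ref{Theorem6} and the fact established earlier that divisors of $x^s-1$ commute. The parity-check polynomial $h(x)$ is the monic polynomial of minimal degree with $h(x) G(x) = 0$, i.e. the monic generator of minimal degree of $I$; and by Lemma \ref{Lemma17}, $h(x)$ divides $x^s-1$, so $h(x) \in I$ gives $(h(x)) \subseteq I = (t(x))$, while minimality of $\deg h(x)$ among nonzero elements of $I = (t(x))$ forces $\deg h(x) \le \deg t(x)$; combined with $h(x) \in (t(x))$ and the degree formula $\deg(fg) = \deg f + \deg g$, this yields $\deg h(x) = \deg t(x)$ and hence $h(x) = t(x)$ (both monic), so $I = (h(x))$.

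Finally, for the dimension count I would compute $\dim_F(R_s/I)$. Writing $x^s - 1 = h(x) g(x)$ as in Lemma \ref{Lemma17}, with $\deg g(x) = s - \deg h(x)$, I claim the cosets of $1, x, \ldots, x^{\deg g(x) - 1}$ form an $F$-basis of $R_s / (h(x))$. Spanning follows from the right division algorithm (Theorem in Section II): any polynomial reduces modulo $h(x)$ to one of degree $< \deg h(x)$... wait, I need to be careful about which factor I quotient by. Since $I = (h(x))$ and $R_s/I \cong F[x;\theta]/(h(x))$ because $h(x)$ divides $x^s-1$, the right division algorithm reduces any element to degree $< \deg h(x)$; but that would give $\dim = \deg h(x)$ only if... hmm, actually $\dim_F F[x;\theta]/(h(x)) = \deg h(x)$ directly, since $\{1, x, \ldots, x^{\deg h(x)-1}\}$ is a basis (spanning by right division by $h(x)$, independence because any nonzero combination has degree $< \deg h(x) = \deg h$ hence cannot be a left multiple of $h(x)$ unless zero, using the degree formula). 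Therefore $\dim_F C = \dim_F (R_s/I) = \deg h(x)$, as claimed.

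The main obstacle I expect is the clean identification $R_s/I \cong F[x;\theta]/(h(x))$ together with the correct bookkeeping of degrees — in particular making sure that the annihilator, computed inside the quotient $R_s$, lifts to the left ideal $(h(x))$ in $F[x;\theta]$ precisely because $h(x)$ is a (two-sided, by the commuting-factors remark) divisor of $x^s - 1$, so that no spurious elements of $(x^s-1)/(h(x))$ interfere. Everything else is routine: the homomorphism property of $\psi$, the first isomorphism theorem, and the dimension count via right division are all standard once this point is pinned down.
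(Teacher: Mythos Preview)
Your proposal is correct and follows essentially the same route as the paper: define the surjective left $R_s$-module map $\Psi:R_s\to C$, $r(x)\mapsto r(x)G(x)$, identify its kernel with $I=(h(x))$, and invoke the first isomorphism theorem. The paper's own proof is a terse three-line version of exactly this argument; your additional justification that $I=(h(x))$ via minimality and that $\dim_F R_s/(h(x))=\deg h(x)$ via right division simply fills in details the paper leaves implicit.
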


\begin{proof}
Define the map
\begin{eqnarray*}
\Psi &:&R_{s}\rightarrow C\text{ by} \\
\Psi (r(x) &=&r(x)G(x)
\end{eqnarray*}%
$\Psi $ is an onto module homomorphism with $\ker \Psi =I=\left( h(x)\right)
.$ Therefore $C\cong R_{s}/\left( h(x)\right) $ and hence $\dim C=\deg h(x).$
\end{proof}

\section{Similar Polynomials in $F[x;\protect\theta ].$}

In the case of QC codes in the ring $F[x],$ we know that the parity-check
polynomials are unique up to a unit. In the case of skew QC codes things are
not as straightforward as in the case of QC codes. To study the parity-check
polynomials we need to introduce the notion of similar polynomials in the
ring $F[x;\theta ].$ Our main result is to show that two polynomials $h_{1}$
and $h_{2}$ are parity-check polynomials for a code $C$ if and only if $%
h_{1} $ and $h_{2}$ are similar polynomials.

\begin{definition}
Two elements $a$ and $b$ in a ring $R$ are called right similar if there is
a $u\in R$ such that%
\begin{eqnarray*}
gcld(u,b) &=&1\text{ and } \\
ua &=&lcrm[u,b].
\end{eqnarray*}
\end{definition}

Left similar elements can be defined similarly.

\begin{example}
Let $F$ be any field of characteristic $p.$ We will show when two linear
polynomials $p_{1}(x)=x-\alpha $ and $p_{2}=x-\beta $ are right similar.

Let $u=c\in F,$ then $gcld\left( u,p_{2}\right) =1$ and $cc^{-1}p_{2}$ is a
right multiple of $u$ and $p_{2}.$ Hence, $[u,p_{2}]=cp_{1}$ iff%
\begin{equation*}
cc^{-1}p_{2}=cp_{1}\gamma =c(x-\alpha )\gamma \text{ for some }\gamma \in F.
\end{equation*}%
\newline
Hence,%
\begin{equation*}
x-\beta =c\theta (\gamma )-ca\gamma =c\gamma ^{p^{t}}-ca\gamma .
\end{equation*}%
This implies that%
\begin{eqnarray*}
1 &=&c\gamma ^{p^{t}}\text{ and\ }\beta =ca\gamma .\text{ This implies} \\
\beta &=&a\gamma ^{1-p^{t}}\text{ or }\alpha \beta ^{-1}=\gamma
^{p^{t}-1}\in F
\end{eqnarray*}%
Therefore, $x-\alpha $ is similar to $x-\beta $ iff $\alpha \beta
^{-1}=\gamma ^{p^{t}-1}\in F.$
\end{example}

If we consider the field $GF(2^{2})$ and the frobenoius automorphism we can
conclude that the polynomials $p_{1}(x)=x-1,$ $p_{2}(x)=x-\alpha $ and $%
p_{3}(x)=x-\alpha ^{2}$ are all right similar.

\begin{theorem}
If $a$ and $b$ are right similar then they are left similar.
\end{theorem}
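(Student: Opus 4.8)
The plan is to unwind the definition of right similarity and exhibit explicitly the element that witnesses left similarity. Suppose $a$ and $b$ are right similar, so there is a $u\in F[x;\theta]$ with $gcld(u,b)=1$ and $ua=\mathrm{lcrm}[u,b]$. First I would write $ua = \mathrm{lcrm}[u,b] = vb$ for a suitable $v$; since $ua=vb$ lies in $\mathrm{lclm}$-type position, the hope is that $v$ is exactly the witness for left similarity between $a$ and $b$, that is, $gcrd(a,v)=1$ and $av=\mathrm{lclm}[a,v]$ — but the roles of $a,b$ may need to be interchanged, so I would be prepared to show instead that $b$ and $a$ are left similar (which is the same statement since left similarity, like right similarity, should be shown to be symmetric, or one simply reverses the pair). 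The cleanest route is to use the characterization $ua=\mathrm{lcrm}[u,b]$ together with the degree identity $\deg(\mathrm{lcrm}[u,b]) = \deg u + \deg b - \deg(\gcd\text{-part})$; since $gcld(u,b)=1$, one expects $\deg(ua)=\deg u+\deg b$, hence $\deg a=\deg b$.

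Next I would pass to the quotient modules. A standard fact (analogue of the Jordan–Hölder / Schreier picture for modules over a PID, here a principal left ideal ring) is that $a$ and $b$ are right similar if and only if $F[x;\theta]/F[x;\theta]a \cong F[x;\theta]/F[x;\theta]b$ as left $F[x;\theta]$-modules, and dually $a$ and $b$ are left similar if and only if $F[x;\theta]/aF[x;\theta] \cong F[x;\theta]/bF[x;\theta]$ as right modules. So the substantive step is: a left-module isomorphism $F[x;\theta]/F[x;\theta]a \cong F[x;\theta]/F[x;\theta]b$ forces a right-module isomorphism $F[x;\theta]/aF[x;\theta] \cong F[x;\theta]/bF[x;\theta]$. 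I would establish this by an explicit computation: from $ua = vb$ with $gcld(u,b)=1$, build the map on the right-module side sending $\overline{r} \mapsto \overline{r u}$ (or $\overline{u r}$, depending on orientation) and check it is well-defined, injective, and surjective, using $gcld(u,b)=1$ to invert it via Corollary~\ref{cor1} (Bézout on the left: there exist $p,q$ with $up+bq=1$).

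The main obstacle I anticipate is keeping the sidedness straight: every product in $F[x;\theta]$ is order-sensitive, and the passage from a left-ideal/left-module statement to a right-ideal/right-module statement is exactly where a careless argument breaks. Concretely, the delicate point is verifying that the candidate witness $v$ (defined by $ua=vb$) satisfies $gcrd(a,v)=1$; this should follow because if $d=gcrd(a,v)$ were nontrivial, writing $a=a'd$, $v=v'd$ would force $\mathrm{lclm}[a,v]$ to have degree strictly less than $\deg a+\deg v$, contradicting the degree bookkeeping inherited from $ua=\mathrm{lcrm}[u,b]$ and $gcld(u,b)=1$. Once that is in hand, the identity $av=\mathrm{lclm}[a,v]$ follows from $av = a v$ and $bv\text{-type}$ manipulation together with the minimality of the lclm, and $a$ and $b$ (in the appropriate order) are left similar. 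I would then remark that interchanging the roles of $a$ and $b$, or invoking symmetry of the similarity relations, gives the statement exactly as asserted.
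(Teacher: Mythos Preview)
Your direct element-chasing plan is essentially the paper's proof. The paper writes $m:=ua=\mathrm{lcrm}[u,b]=bc$ (a right multiple of $b$ has the form $bc$, not $vb$---exactly the sidedness slip you anticipated), takes $c$ as the witness, and proves $\mathrm{gcrd}(c,a)=1$ by the concrete version of your ``degree bookkeeping'': if $a=\alpha_2 d$ and $c=\alpha_1 d$ with $\deg d>0$, then $u\alpha_2=b\alpha_1$ is a common right multiple of $u$ and $b$ of degree $\deg m-\deg d<\deg m$, contradicting $m=\mathrm{lcrm}[u,b]$; with $\mathrm{gcrd}(c,a)=1$ in hand, $m=bc$ is a common left multiple of $c$ and $a$ of the minimal possible degree, so $bc=\mathrm{lclm}[c,a]$ and $b,a$ are left similar via $c$. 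Note that your stated target ``$\mathrm{gcrd}(a,v)=1$ and $av=\mathrm{lclm}[a,v]$'' is still misoriented---once $vb$ is corrected to $bc$, the correct pair is $\mathrm{gcrd}(c,a)=1$ together with $bc=\mathrm{lclm}[c,a]$. The module-isomorphism alternative you sketch is not used for this theorem; the paper reserves that style of argument for the next theorem on parity-check polynomials.
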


\begin{proof}
Suppose there is a $u\in R$ such that
\begin{eqnarray*}
gcld(u,b) &=&1\text{ and } \\
ua &=&lcrm[u,b].
\end{eqnarray*}%
Let%
\begin{equation*}
m=ua=lcrm[u,b].
\end{equation*}%
Then $m=ua=bc$ for some $c.$ This shows that
\begin{equation*}
lclm[c,a]=m.
\end{equation*}%
Now suppose $gcrd(c,a)=d,$ then%
\begin{equation*}
c=\alpha _{1}d\text{ and }a=\alpha _{2}d.
\end{equation*}%
Hence,%
\begin{equation*}
m=ua=u\alpha _{2}d=bc=b\alpha _{1}d.
\end{equation*}%
This implies that
\begin{equation*}
lcrm[u,b]=u\alpha _{2}=b\alpha _{1}\neq m.
\end{equation*}%
A contradiction. Hence $gcrd(c,a)=1.$ This shows that $a$ and $b$ are left
similar.
\end{proof}

From now on we if $a$ and $b$ are right similar we will say that they are
similar. In the case that the ring is commutative then two elements are
similar iff they differ by a unit.

\begin{theorem}
Let $h_{1}(x)$ be a parity-check polynomial of a skew QC code $C_{1},$ and
let $h_{2}(x)$ be a parity-check polynomial of a skew QC code $C_{2}$ then $%
C_{1}=C_{2}$ iff $h_{1}(x)\,$\ is similar to $h_{2}(x).$
\end{theorem}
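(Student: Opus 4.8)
The plan is to reduce the statement to a fact about the annihilator ideals of $C_1$ and $C_2$ in $R_s$. By Lemma \ref{Lemma 19}, a skew QC code $C$ is determined by its annihilator $I=(h(x))$, where $h$ is the parity-check polynomial, via $C\cong R_s/I$. So the first step is to observe that $C_1=C_2$ if and only if they have the same annihilator ideal in $R_s$, i.e. $(h_1(x))=(h_2(x))$ as left ideals of $R_s=F[x;\theta]/(x^s-1)$. Indeed, if $C_1=C_2$ then trivially $\mathrm{Ann}(C_1)=\mathrm{Ann}(C_2)$; conversely, if $(h_1)=(h_2)$, then by Lemma \ref{Lemma17} the generator polynomials satisfy $g_1=(x^s-1)/h_1$ and $g_2=(x^s-1)/h_2$, and equality of the ideals $(h_1)=(h_2)$ forces $h_1$ and $h_2$ to have the same degree (both equal $\dim C_i$ by Lemma \ref{Lemma 19}) and hence $g_1=g_2$, which recovers the same code. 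Thus the whole theorem becomes: $(h_1(x))=(h_2(x))$ in $R_s$ if and only if $h_1$ is similar to $h_2$.

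The second and main step is this ideal-theoretic equivalence. Suppose first that $(h_1)=(h_2)$ in $R_s$. Pull everything back to $F[x;\theta]$: both $h_1$ and $h_2$ are right divisors of $x^s-1$ (as parity-check polynomials), and the equality of the left ideals they generate modulo $x^s-1$ means $h_2=ph_1+t(x^s-1)$ and $h_1=qh_2+t'(x^s-1)$ for suitable polynomials. Since $(x^s-1)$ is central (Lemma \ref{Lemma7}) and factors commute (Lemma \ref{Lemma8}), one can massage these relations into a statement relating $h_1,h_2$ and the cofactors $g_1,g_2$, and then produce the unit $u$ witnessing similarity: take $u=g_2$ (or a suitable monic associate), check $gcld(u,h_2)=1$ using $x^s-1=h_2 g_2=g_2 h_2$ together with Corollary \ref{cor1} to write a Bézout-type identity, and check $u h_1 = lcrm[u,h_2]$ by a degree count — both sides divide $x^s-1$ and have degree $\deg g_2+\deg h_1=\deg g_2+\deg h_2$ when $\deg h_1=\deg h_2$. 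Conversely, if $h_1$ is similar to $h_2$ via $u$ with $gcld(u,h_2)=1$ and $uh_1=lcrm[u,h_2]$, then writing $uh_1=h_2 v$ for some $v$ and comparing degrees gives $\deg h_1=\deg h_2$; then one shows $h_1\in(h_2)$ and $h_2\in(h_1)$ inside $R_s$ by using the coprimality $gcld(u,h_2)=1$ (via Corollary \ref{cor1}, there exist $a,b$ with $ua+h_2 b=1$) to solve for $h_1$ as a left multiple of $h_2$ modulo $x^s-1$, and symmetrically.

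I expect the main obstacle to be the bookkeeping in the second step: keeping track of which side each divisor sits on, since $F[x;\theta]$ is non-commutative and "similar" is defined asymmetrically (right similar) even though by the earlier theorem right and left similarity coincide. The crucial simplification, which I would lean on heavily, is that all factors of $x^s-1$ commute (consequence of Theorem \ref{Theorem6}, Lemma \ref{Lemma7}, Lemma \ref{Lemma8}), so that $h_i$ and $g_i$ behave almost as in the commutative case; the residual subtlety is only that $u$ itself need not be a factor of $x^s-1$, so the coprimality and lcrm conditions must be verified directly rather than read off from a factorization. A clean way to organize this is to first prove the degree equality $\deg h_1=\deg h_2$ (which follows from either hypothesis and makes $g_1,g_2$ have equal degree too), and then show both implications reduce to: $h_1$ and $h_2$ generate the same left ideal in $R_s$ $\iff$ $g_1=g_2$ $\iff$ (after normalizing to monic) $h_1$ and $h_2$ are similar, invoking the Bézout identities of Theorem \ref{Lemma16} and Corollary \ref{cor1} to pass between ideal membership and the explicit $u$.
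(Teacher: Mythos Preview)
Your central reduction is to the claim that $(h_1)=(h_2)$ as left ideals of $R_s$ if and only if $h_1$ is similar to $h_2$, and this is where the approach breaks down. In the non-commutative setting, similarity does \emph{not} mean generating the same one-sided ideal; it is precisely the condition that the quotient modules $R_s/(h_1)$ and $R_s/(h_2)$ are isomorphic. The paper's own example already separates the two notions: $x-1$ and $x-a$ are similar in $GF(4)[x;\theta]$, yet in $R_2=GF(4)[x;\theta]/(x^2-1)$ each generates a one-dimensional left ideal and neither is a left multiple of the other, so $(x-1)\neq(x-a)$. Accordingly, the paper never passes through ideal equality. It uses Lemma~\ref{Lemma 19} to identify $C_i\cong R_s/(h_i)$ and then works directly with a module isomorphism: in the forward direction it takes an isomorphism $\Phi:R_s/(h_1)\to R_s/(h_2)$, sets $a=\Phi(1+(h_1))$, and verifies that $a$ is the similarity witness, i.e.\ $gcrd(a,h_2)=1$ and $lclm[a,h_2]=h_1a$; in the converse it builds the isomorphism $r+(h_1)\mapsto ru+(h_2)$ directly from the similarity witness $u$ and checks it is bijective.

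Several subsidiary steps in your outline also fail independently. From $(h_1)=(h_2)$ you deduce $\deg h_1=\deg h_2$ and then conclude $g_1=g_2$, but equal degree does not force equal cofactors of $x^s-1$ in $F[x;\theta]$ (again $x-1$ versus $x-a$ as divisors of $x^2-1$). Even if it did, a 1-generator skew QC code is not determined by its generator polynomial $g$ alone; the whole tuple $(p_1g,\ldots,p_lg)$ matters, so $g_1=g_2$ would still not give $C_1=C_2$. Finally, your proposed witness $u=g_2$ need not satisfy $gcld(u,h_2)=1$: whenever $x^s-1$ has a repeated factor (e.g.\ $x^4-1=(x-1)^4$ in the paper's example), the cofactors $g$ and $h$ share a nontrivial common divisor. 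The remedy is to abandon the ideal-equality route entirely and argue through module isomorphisms as the paper does.
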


\begin{proof}
Suppose $C_{1}=C_{2}.$ Then $R_{s}/\left( h_{1}(x)\right) \cong R_{s}/\left(
h_{2}(x)\right) .$ Let%
\begin{equation*}
\Phi :R_{s}/\left( h_{1}(x)\right) \rightarrow R_{s}/\left( h_{2}(x)\right)
\end{equation*}%
be such a module isomorphism. Suppose $\Phi (1+\left( h_{1}(x)\right)
)=a+(h_{2}(x))$ then
\begin{equation}
\Phi (r+\left( h_{1}(x)\right) )=ra+(h_{2}(x))\text{ for any }r\in R_{s}.
\label{EQ1}
\end{equation}%
In particular we have
\begin{equation*}
\Phi (h_{1}+\left( h_{1}(x)\right) )=h_{1}a+(h_{2}(x)).
\end{equation*}%
Since $\Phi $ is a module isomorphism we must have%
\begin{equation*}
\Phi (h_{1}+(h_{1}(x)))=h_{2}(x)=0.
\end{equation*}%
This implies that $h_{1}a\in (h_{2}(x))$ and hence $h_{1}a=r_{2}h_{2}=m.$

Since $\Phi $ is surjective then there is $c\in R$ such that
\begin{equation*}
\Phi (c+(h_{1}(x)))=ca+(h_{2}(x))=1+(h_{2}(x)).
\end{equation*}%
Hence $ca-1\in (h_{2}(x)).$ This gives%
\begin{eqnarray*}
ca-1 &=&l(x)h_{2}(x).\text{ Or} \\
ca-l(x)h_{2}(x) &=&1.
\end{eqnarray*}%
Hence
\begin{equation}
gcrd(a,h_{2}(x))=1.  \label{EQ2}
\end{equation}%
Suppose $lclm[a,h_{2}(x)]=k.$ Then
\begin{equation*}
k=\alpha _{1}a=\alpha _{2}h_{2}\in (h_{2}(x)).
\end{equation*}%
Since $\Phi $ is injective then $\alpha _{1}\in (h_{1}(x)).$ Hence $\alpha
_{1}=t_{1}h_{1}(x)$ and $k=t_{1}h_{1}(x)a.$ But we have\ $%
h_{1}a=r_{2}h_{2}=m.$ Therefore%
\begin{equation}
lclm[a,h_{2}(x)]=h_{1}a.  \label{EQ3}
\end{equation}%
From equations \ref{EQ2} and \ref{EQ3}, we get that $h_{1}(x)$ and $h_{2}(x)$
are (left) similar.

Now suppose $h_{1}(x)$ is (left) similar to $h_{2}(x).$ Then there is $u$
such that
\begin{eqnarray*}
gcrd(u,h_{2}) &=&1\text{ and } \\
lclm[u,h_{2}] &=&h_{1}u
\end{eqnarray*}%
Define%
\begin{equation*}
\Psi :R_{s}/\left( h_{1}(x)\right) \rightarrow R_{s}/\left( h_{2}(x)\right)
\end{equation*}%
by%
\begin{equation*}
\Psi (r+(h_{1}(x))=ru+(h_{2}(x)).
\end{equation*}%
It is clear that $\Psi $ is a module homomorphism. It is left to show that $%
\Psi $ is a bijective function.

Since $gcrd(u,h_{2})=1$ then $c_{1}u+c_{2}h_{2}=1$ for some $c_{1}$ and $%
c_{2}\in R_{s}.$ This implies that%
\begin{equation*}
\Psi (c_{1}+(h_{1}(x))=c_{1}u+(h_{2}(x))=1+(h_{2}(x)).
\end{equation*}%
So for any $r+(h_{2}(x))\in R_{s}/(h_{2}(x))$ we have%
\begin{equation*}
\Psi (rc_{1}+(h_{2}(x)))=r\Psi (c_{1}+(h_{2}(x)))=r+(h_{2}(x)).
\end{equation*}%
Therefore $\Psi $ is surjective. Suppose
\begin{equation*}
\Psi (s+(h_{1}(x))=su+(h_{2}(x))=h_{2}(x)
\end{equation*}
for some $s.$ Then $su\in (h_{2}(x)).$ So,
\begin{equation*}
su=rh_{2}(x)\text{ for some }r.
\end{equation*}%
Since $lclm[u,h_{2}]=h_{1}u$, we have
\begin{equation*}
su=t_{1}h_{1}u.
\end{equation*}
To show $\Psi $ is injective we need to show that $s\in (h_{1}(x)).$ By the
right division algorithm we have%
\begin{equation*}
s=q_{1}h_{1}+r_{1}\text{ where }\deg r_{1}<\deg h_{1}.
\end{equation*}%
This implies that%
\begin{equation*}
su=q_{1}h_{1}u+r_{1}u\Rightarrow r_{1}u\in (h_{2}(x))
\end{equation*}%
Since $lclm[u,h_{2}]=h_{1}u$ then $r_{1}u=t_{2}h_{1}u\in (h_{1}(x)).$ If $%
r_{1}\in (h_{1}(x))$ then
\begin{equation*}
s=q_{1}h_{1}+r_{1}\in (h_{1}(x)),
\end{equation*}
and hence $\Psi $ is injective. If $r_{1}\notin (h_{1}(x))$ then repeat the
right division algorithm again until we get a remainder $r_{i}\in
(h_{1}(x)). $ This implies $r_{i-1},r_{i-2},\ldots ,s\in (h_{1}(x)).$
Therefore $\Psi $ is injective and hence it is an isomorphism.
\end{proof}

\section{Search Results}

The \textbf{Hamming weight enumerator}, $W_C( y)$, of a code $C$ is defined
by
\begin{equation}
W_{C}(y)=\sum_{c\in C}y^{w(c)}=\sum_{i}A_{i}y^{i}  \label{usualh}
\end{equation}%
where $w(c)$ is the number of the nonzero coordinates of the codeword $c$
and $A_{i}=|\{c\in C|w(c)=i\}|,$ i.e. the number of codewords in $C$ whose
weights equal to $i$.

The smallest non-zero exponent of $y$ with a nonzero coefficient in $W_C(y)$
is equal to the minimum distance of the code.

We know that the ring $F[x]$ is a unique factorization domain and the
polynomial $x^{s}-1$ has a unique factorization as a product of irreducible
polynomials in $F[x].$ Things are different in the ring $F[x;\theta ].$ The
skew polynomial ring $F[x;\theta ]$ is not a unique factorization domain and
hence polynomials in general do not have a unique factorization as a product
of irreducible polynomials.

\begin{example}
Consider $x^{4}-1$ over $F=GF(4)$. We have%
\begin{eqnarray*}
x^{2}-1 &=&(x-1)(x-1) \\
&=&(x-a )(x-a ^{2}),
\end{eqnarray*}%
and%
\begin{eqnarray*}
x^{4}-1 &=&(x-1)^{4} \\
&=&(x+a )(x+a ^{2})(x+a )(x+a ^{2}) \\
&=&(x+a )(x+a )(x+a ^{2})(x+a ^{2}) \\
&=&(x+a )(x+a ^{2})(x+1)(x+1).
\end{eqnarray*}
\end{example}

One of the main problems of coding theory is to construct codes with best
possible parameters. There is a well known table of linear codes with best
known parameters over small finite fields \cite{server2}. The computer
algebra system Magma also has such a database \cite{magma}. Researchers
continuously update these tables as new codes are discovered. As the gaps
narrow in the tables, it gets more and more difficult to find new codes.
Many of the new codes discovered in recent years have come from the class of
QC and QT codes (e.g., \cite{cyclicZ4},\cite{optimal},\cite{nine},\cite{ieee}%
). One advantage of studying codes in $F[x;\theta ]$ compared to codes over $%
F[x]$ is that the number of factors of $x^{s}-1$ in $F[x;\theta ]$ is much
larger. Therefore, there are many more skew cyclic and skew QC codes in $%
F[x;\theta ]$ than there are cyclic and QC codes in $F[x].$ This suggests
that it may be possible to find new codes in the ring $F[x;\theta ]$ with
larger Hamming distances. Our search has yielded a number of skew QC codes
with best known parameters. We call such codes ``good codes". Seven of these
codes lead to improvements in the table \cite{server2}. These are called
``new codes". The improvement on minimum distance is 1 unit in each case. We
present these codes in the rest of this section. These results show that the
class of skew QC is a promising class that deserve further attention.

In view of the previos sectiond and the findings obtained therin, our
strategy to search for new codes or good codes is as follows: Choose an
integer $s$, and find a factor $g$ of $x^s-1$ in $F[x;\theta ]$ (where $%
F=GF(4)=\{0,1,a,a^2\}$). Then search for polynomials $f_1,f_2, \dots,f_{l-1}$
so that the skew QC codes of the form $(g,f_1\cdot g, \dots, f_{l-1}g)$ have
large minimum distances. We have used the computer algebra system Magma to
carry out all of the computations.

\begin{example}
We consider a skew $2$-QC code of length $48$. Hence, we need a
factorization of $x^{24}-1$. One such factorization is $x^s-1=g\cdot h$
where $g =x^{12} + ax^9 + x^8 + ax^7 + ax^6 + x^5 + a^2x^4 + ax^3 + ax^2 +
a^2x + a^2$ and $h=x^{12} + ax^9 + x^8 + ax^7 + a^2x^6 + x^5 + ax^4 + ax^3 +
a^2x^2 + a^2x + a$. Letting $f = x^{11} + a^2x^10 + ax^9 + a^2x^7 + x^6 +
a^2x^5 + ax^4 + ax^3 + x + a$, the code generated by $(g,f\cdot g)$ has
parameters $[48,12,24]$ over $GF(4)$. This code has a larger minimum
distance than the previously best known code with the same length and
dimension.

The weight enumerator of this code is as follows:

$W_{C}=1 +3390y^{24}+ 4608y^{25} +19944y^{26} + 25968y^{27} + 99612y^{28}+
124272y^{29}+ 388872y^{30}+427392y^{31}+ 1125315y^{32}+ 958464y^{33}+
2102544y^{34}+1529568y^{35}+ 2798568y^{36}+ 1613664y^{37}+ 2320272y^{38}+
1078272y^{39}+ 1224378y^{40}+ 436608y^{41}+ 345096y^{42}+ 84528y^{43}+
54972y^{44}+ 8112y^{45}+ 2664y^{46}+ 132y^{48}.$
\end{example}

\begin{example}
Let us consider a skew $3$-QC code of length $72.$ We again need a
factorization of $x^{24}-1.$ Here is another factorization of $x^{24}-1$: $%
x^{24}-1=g\cdot h$ , where $g =x^3 + a^2x^2 + 1$ and $h=x^{21} + ax^{20} +
x^{19} + a^2x^{18} + x^{16} + x^{13} + ax^{12} + x^{11} + a^2x^{10} + x^8 +
x^5 +ax^4 + x^3 + a^2x^2 + 1$. Now let $%
f_1=x^{20}+x^{19}+x^{17}+a^2x^{15}+ax^{14}+a^2x^{13}+a^2x^{12}+a^2x^{11}+x^{10} +a^2x^9+x^8+x^7+ax^6+a^2x^5+ax^2+1
$ and $f_2=x^{13} + a^2x^{12} + x^{10} + x^9 + x^8 + a^2x^7 + ax^3 + ax $
and consider the code $C$ generated by $(g,f_1\cdot g, f_2\cdot g)$. It is a
$[72,21,29]$ code and therefore better than the previously best known code
with parameters $[72,21,28]$. The weight enumerator of $C$ is also available
(but not printed here).
\end{example}


In the rest of the examples, we use the trivial factor of 1,
therefore the generators of the codes are of the form $(f_1,f_2,
\dots, f_l)$. We shall refer to such codes as non-degenerate skew
QC codes (since the codes of the form $(f_1g,f_2g, \dots, f_lg)$
with $\deg{g}>0$ are sometimes referred to as degenerate QC codes
in the literature). The polynomials are represented by a list of
coefficients of increasing powers. Hence, the sequence $a001aa^21$
represents the polynomial $x^6+a^2x^5+ax^4+x^3+a$.

\begin{example}
A $[48,16,20]$-code generated by $f1=0a^2a^2a0a^210a^20a11a^2a^21$, $%
f2=100a^20a^2a^2aa^21a^21a^20a^20$, $f3=a^2aa0a^20aa1a^2aaa0aa$.
\end{example}

\begin{example}
A $[96,16,49]$-code generated by $f1=0a^2a^21aa^20aa100a^2a0a$, $%
f2=1a^2a^2aa00a^2a^211a^21a0a^2$, $f3=0a^2a^200aaaa^21a1a^20aa^2$, $%
f4=a0a^200a0a^2aa0aa1a^21$, $f5=a^2011011a^21a1a^2a111$, $%
f6=a100a^2a^2a^2a1a001aa^2a^2$
\end{example}

\begin{example}
A $[100,20,47]$-code generated by\newline
$f1=a00a^2a^2001a^2a^2a^2011a1a^2a11$,\newline
$f2=01a^20a1a01a^21a1a01001a^2$,\newline
$f3=a1aa1001aa^20000a^2a1a^2a^21$,\newline
$f4=1a1aa11a^2a^2aa^20a^2a0010a^21$,\newline
$f5=a^20111aa^21a^2aa^2a^2a0a^201a11$
\end{example}

\begin{example}
A $[140,20,72]$-code generated by\newline
$f1=1a^2a^2aa1a10aa^210a01a^2a^201$,\newline
$f2=aa0a^201a^2aa0a0a1aa1a10$,\newline
$f3=a^2a^2a^21aa^2a1a0aaa^2a^20aa0aa$,\newline
$f4=10001aaa^20a010a^2a^2a0010$,\newline
$f5=a11001a1a^2a^21aa^210aa^21a^2a$,\newline
$f6=a^20a^210a^211a^2a^2a^21a^2a^20a^20110$,\newline
$f7=a^21011000a^2a^201a^201a^2aa^2a^21$
\end{example}

\begin{example}
A $[110,22,51]$-code generated by\newline
$f1=1a^2010aa0a^201a^2100a0a^2a0a^20$,\newline
$f2=a^2a0101aa^21a^2a^21a^211aaa^200a^21$,\newline
$f3=00a^2a00a^201a^2aa100a0a^2a11a^2$,\newline
$f4=01a01010a^211a01100a^2a^2a1a$,\newline
$f5=a^20a0a^2a^2a00a^2a10a0aaa1a^21a$
\end{example}

We summarize the rest of the results of our search that yielded good codes
in the following three tables.


\begin{table}[tbp]
\caption{Parameters and Generators of the Good skew QC Codes of index 2 }
\label{tab:cyclic}
\begin{center}
\vspace*{0.2in}
\begin{tabular}[t]{|p{1.4 cm}|p{4.5 cm}|p{4.5 cm}|}
\hline
\multicolumn{1}{|c|}{Parameters} & \multicolumn{1}{c|}{$g$} &
\multicolumn{1}{c|}{$f$} \\ \hline\hline
$[40,9,21]$ & ${\small {aa^200a1a^2a^210a1}}$ & ${\small {a0000aa^201}}$ \\
\hline
$[40,10,20]$ & $a^2a^2a01a0aa^211$ & $a^2a^2a^200a1aa^21$ \\ \hline
$[40,11,19]$ & $a10aaaa^21a1$ & $a11aa^2100a^201$ \\ \hline
$[40,12,18]$ & $101a^200aa^21$ & $1a100aaaaa^2a1$ \\ \hline
$[40,14,16]$ & $10a^21a01$ & $11aaa^2a011$ \\ \hline
$[40,16,15]$ & $10001$ & $aa0a^201011a^21a^20a^2$ \\ \hline
$[40,17,14]$ & $a^21a^21$ & ${\tiny {a^2a1a^210a^20a0a^2a^2a^20a^211}}$ \\
\hline
$[44,12,20]$ & $11111aa11a^21$ & $a000a^2a^2aa0a^2a1$ \\ \hline
$[48,11,24]$ & $1aa^21a0a^2a100101$ & $aaaaa^21a10a1$ \\ \hline
$[48,12,23]$ & $a110a^21a^2a11a^2a^21$ & $1a^20110010a11$ \\ \hline
$[48,13,22]$ & $a^2aa^200a100111$ & $0001a0a^2a^2a^2aaa^21$ \\ \hline
$[48,14,21]$ & $11a^2a^2a10a^2101$ & $1a11010a1a^2aa11$ \\ \hline
$[48,15,20]$ & $a^2a01a1a^2111$ & $1aa^2a^2a^2a^2a^2a0aa1a11$ \\ \hline
$[48,16,19]$ & $a00101a^2a1$ & $a^2a0a1a1aa1aa^2aa^2a^21$ \\ \hline
$[52,13,24]$ & $a^2a^2aa^210a^21a11aa1$ & $aa^211aa^21a^2a^2a^2a^211$ \\
\hline
$[60,11,32]$ & ${\small {aaa^2a^2a^2a^211aa110000aa11}}$ & ${\small {%
aa^2aa^20aa^2a1a^21}}$ \\ \hline
$[60,14,28]$ & $11a^2a11a^2a^2a^2a^2aa^2a^2a^2a^2a1$ & $a^20a^2a0a110a^21011$
\\ \hline
\end{tabular}%
\end{center}
\end{table}

\section{Conclusion}

In this paper, we study the structure of 1-generator skew QC codes in the
non-commutative ring $F[x;\theta ]$. We have shown that skew QC codes are
left submodules of the ring $R_{s}^{l}=\left( F[x;\theta ]/(x^{s}-1\right)
)^{l}.$ We also introduced the notion of similar polynomials in the ring $%
F[x;\theta ]$ and showed that parity-check polynomials are unique up to
similarity. Our search results showed the construction of several new linear
codes with Hamming distance larger than the Hamming distance of the best
linear codes with similar parameters. An important problem that needs to be
addressed is an efficient method of obtaining all factorizations of $x^{n}-1$
in the skew polynomial ring. Also, a BCH type bound for skew cyclic and skew
QC codes is a future topic of interest.



\begin{table}[tbp]
\caption{Parameters and Generators of the Good skew QC Codes of index 3 and
4 }
\label{tab:cyclic}
\begin{center}
\vspace*{0.2in}
\begin{tabular}[t]{|p{1.4 cm}|p{2 cm}|p{4.3 cm}|p{4.3 cm}|p{4.5 cm}|}
\hline
\multicolumn{1}{|c|}{Parameters} & \multicolumn{1}{c|}{$g$} &
\multicolumn{1}{c|}{$f_1$} & \multicolumn{1}{c|}{$f_2$} &
\multicolumn{1}{c|}{$f_3$} \\ \hline\hline
$[48,11,24]$ & $1a01a^21$ & $1aa^21a1aa111$ & $a^2a^2aa111$ & - \\ \hline
$[48,13,22]$ & $a1a1$ & $aa^2a0aa01aa101$ & $aa^2a^2011a11$ & - \\ \hline
$[48,14,21]$ & $a^2a^21$ & $a^2a^21aaaa^2a^20aaa01$ & $1a^2aa01a^21$ & - \\
\hline
$[48,15,20]$ & $a1$ & $0a^2a^211a0a^21aa^20aa1$ & $aaa1100a^21$ & - \\ \hline
$[54,13,26]$ & $a1a1a1$ & $a1a^201a^2a0a^2a101$ & $a^2a^21a^20001$ & - \\
\hline
$[54,15,24]$ & $aa11$ & $a^2a^2a^21aaa1aa^20a^20a1$ & $a0aa^20a^2a^2a11a^21$
& - \\ \hline
$[60,14,28]$ & $a^20aaa01$ & $111a00aa^210a0a^21$ & $10a111aa^2a1a^2aa1$ & -
\\ \hline
$[60,18,25]$ & $aa1$ & $1a1a01aa^2a^2a^2aa^200a^21a^21$ & $%
a^2a^2a011a^21aa^2101$ & - \\ \hline
$[60,19,24]$ & $a1$ & $00a^2a^2aaaa^21a^21a^2a^2a^2aa^2111$ & $%
1aa^20a^21a11a^2a11$ & - \\ \hline
$[72,21,28]$ & $1aa1$ & $000a1aa^21a^200aaaa^2aa^2a^2011$ & $%
0aa1a^20a^211aa00a1$ & - \\ \hline
$[72,19,30]$ & $a^2a1001$ & $1a^2a10a^2a0a^2aa01aaaaa1$ & $1a^21a^2a^2a1111$
& - \\ \hline
$[72,15,34]$ & $aaa^21a1aa^211$ & $a^20a0aa^2a^211a^2aa0a^21$ & $%
01aa10a0a^2a^21$ & - \\ \hline\hline
$[56,11,29]$ & $11a1$ & $0a^200a01a^2$ & $0a^21a1aa^21$ & $1a^210aa1a^2a^2$
\\ \hline
$[56,12,28]$ & $101$ & $aa^20a^2a100aa$ & $a^21a^2a^2a0aa^2aa^2a^2$ & $%
a^2a0aaa^21$ \\ \hline
$[64,13,32]$ & $a1a1$ & $11aa1011a^2a^2a$ & $a^2000a$ & $10a1a^2011$ \\
\hline
$[64,14,31]$ & $101$ & $0a1a01a^2a^20aa^2$ & $1aaa^2a000a^2a^21a$ & $%
aa^2a^211a^20aa$ \\ \hline
$[64,15,30]$ & $a^21$ & $aa11aa^21aaaa$ & $a^2a^201aa^2aa^2001$ & $%
1a^201a1a1 $ \\ \hline
$[80,17,38]$ & $1111$ & $a^2a^201aaaa01a^2aa^2a$ & $11a11aa00aaa^2$ & $%
a^2111aa^21aaa1a^20a^2$ \\ \hline
$[72,15,34]$ & $1a^2a1$ & $0a1011a^2a^20aa^21a$ & $a010a0a^2a^210aa$ & $%
a^201aaa^2$ \\ \hline
$[96,20,44]$ & $10101$ & $11a^2a000110a^2a^2a0aa^2aa^2$ & $%
a0111aaa^20aaa^20a^2a^2$ & $a^2a^210a^210a$ \\ \hline
$[96,23,41]$ & $a^21$ & $01aa^2a^21a^200aa^210aaa^2a1$ & $%
0aaa^2a^2a^2a^2aa0a^2a00a^2a^2a$ & $a^2101a^2aaa^21a01a10aa$ \\ \hline
$[112,24,48]$ & $1a^2a01$ & $00a1a^2a^21000a^2a^21a^200a0a101$ & $%
a^2aa^2a^2a^21aa^2a01a1a^2a^211a1$ & $00a^2a00a^2a0a^21a^2101$ \\ \hline
$[112,22,50]$ & $aaa^2a^2a11$ & $a^21a^2a1a0a^2a^21110a^211$ & $%
11a1a^2a^210a11a0a^2$ & $0a^20aaa^2a100a^21a^2a^2$ \\ \hline
$[120,21,57]$ & $aa1aa^2a^211a1$ & $a^21a^2aa^2a011aa00aa1$ & $%
00aa^2aa^21a^20a^200a^2a^20a^2a^2$ & $1aa0a^2a^2100a^2$ \\ \hline
$[120,23,54]$ & $a010a0a^21$ & $1a^211a0a00aa^2100a$ & $%
010a110aa^200a^211a^2aa^2$ & $aa0aa^200a^21a11001a1a$ \\ \hline
$[120,25,52]$ & $11a^2a^211$ & $001a00a^2a^21a^21aaaa111$ & $%
a^20a^2aa1a^21a^2aaa11a^2a$ & {\small {$%
a^21a^2a00aaa^2a^21a000aaa^2a^2a^21a^2$}} \\ \hline
\end{tabular}%
\end{center}
\end{table}

\newpage

\newpage

\begin{table}[tbp]
\caption{Parameters and Generators of the Good non-degenerate skew
QC Codes of index up to 4 } \label{tab:cyclic}
\begin{center}
\vspace*{0.2in}
\begin{tabular}[t]{|p{1 cm}|p{4 cm}|p{4 cm}|p{4 cm}|p{4 cm}|}
\hline
\multicolumn{1}{|c|}{{\small {Parameters}}} & \multicolumn{1}{c|}{$f_1$} &
\multicolumn{1}{c|}{$f_2$} & \multicolumn{1}{c|}{$f_3$} &
\multicolumn{1}{c|}{$f_4$} \\ \hline\hline
${\small {[40,20,12]}}$ & {\small {$a^21aaa^2a^2aaaa1aa1010aaa$}} & {\small {%
$0a1 0a1aa11a10a 0 1 1 0 1 0$}} & $-$ & - \\ \hline
${\small {[30,10,14]}}$ & {\small {$a^2aa00a10aa$}} & {\small {$%
000a^2a^2a1a1a^2$}} & {\small {$0a^21aa^20aa^21a$}} & - \\ \hline
${\small {[36,12,16]}}$ & {\small {$0 a a^2 0 0 a 0 0 a^2 a^2 a^2 a$}} &
{\small {$011a1a^21a^20aa^20$}} & {\small {$a^2 a 0 0 a^2 a a a 1 1 a a$}} &
- \\ \hline
${\small {[42,14,18]}}$ & {\small {$a a a a^2 1 a^2 a a 1 0 a^2 a^2 a a$}} &
{\small {$a^2 a^2 0 a^2 a^2 a a a 1 0 0 a^2 0 0$}} & {\small {$1 0 0 a^2 a^2
1 a^2 a^2 a^2 1 a^2 1 0 0$}} & - \\ \hline
${\small {[48,16,19]}}$ & {\small {$a^2 a^2 0 0 a^2 a a a 0 1 1 0 a a 1 a$}}
& {\small {$0 0 1 0 a^2 0 a^2 a 1 a 0 a a a^2 a a$}} & {\small {$a^2 1 a^2 1
a^2 1 0 1 a 1 a^2 a 0 0 a^2 0$}} & - \\ \hline
${\small {[66,22,25]}}$ & {\small {$a^2 1 0 a a^2 a^2 a 1 a 0 0 a^2 0 0 0 1
0 a 1 a 0 1$}} & {\small {$0 0 1 a 0 a^2 0 1 a^2 0 a^2 a a a 0 a 0 1 1 a a 0$%
}} & {\small {$1 0 0 0 0 0 a^2 1 1 0 a a a a a^2 0 a 0 a 0 a^2 a^2$}} & - \\
\hline
${\small {[40,10,20]}}$ & {\small {$1 a^2 a^2 a a^2 a^2 a^2 a^2 a^2 a$}} &
{\small {$0 a^2 1 1 0 a a^2 a 1 1$}} & {\small {$a^2 a a^2 a 0 0 0 a^2 a^2 1$%
}} & {\small {$a^2 a 0 a^2 1 a 0 a a 0$}} \\ \hline
${\small {[48,12,23]}}$ & {\small {$0 1 a^2 a^2 a a^2 0 a^2 a^2 a^2 0 1 $}}
& {\small {$a^2 0 a a a^2 1 a a a^2 0 a 1$}} & {\small {$a a a a^2 a^2 1 0
a^2 a^2 0 0 0$}} & {\small {$0 a^2 a a a a a 0 1 a^2 a 0$}} \\ \hline
${\small {[72,18,32]}}$ & {\small {$a 0 a^2 1 1 a a^2 a^2 0 a 1 1 a^2 a^2 0
a^2 0 0 $}} & {\small {$1 0 0 1 0 1 1 a 1 a 0 a a 0 1 1 a^2 0$}} & {\small {$%
0 1 0 a a^2 0 a^2 a 0 a^2 a^2 a 0 0 0 0 a 0$}} & {\small {$a a 1 0 0 0 a 1 0
a 0 0 1 a 0 a^2 1 a$}} \\ \hline
${\small {[80,20,35]}}$ & {\small {$a^2 a^2 1 1 a 1 a^2 a 1 1 a a 0 1 a^2 0
0 a a^2 0$}} & {\small {$1 0 a^2 a a^2 a a 1 a a^2 a 0 a^2 1 a a a^2 0 a^2 0$%
}} & {\small {$a^2 1 a a^2 0 a^2 a^2 1 1 1 a^2 0 1 0 1 1 a a 1 a$}} &
{\small {$1 0 a 0 a^2 0 0 a^2 a^2 a^2 1 1 0 a^2 a^2 1 1 a a 0$}} \\ \hline
${\small {[96,24,40]}}$ & {\tiny {$a 0 a 0 0 1 0 1 1 a^2 a^2 a 1 1 1 1 a^2
a^2 0 a 0 0 0 a^2$}} & {\tiny {$0 a^2 1 1 1 0 0 0 a^2 1 0 0 0 1 0 1 a a a^2
0 a^2 0 1 a$}} & {\tiny {$a 1 a a^2 0 a^2 a^2 a a^2 a 1 a^2 a^2 0 0 1 a^2
a^2 a a 0 a 0 a$}} & {\tiny {$a a 0 1 a 1 0 1 a^2 1 a 0 1 0 a^2 0 1 1 a 0 0
a a a$}} \\ \hline
\end{tabular}%
\end{center}
\end{table}

\newpage

\newpage

\begin{table}[tbp]
\caption{Parameters and Generators of the Good non-degenerate skew
QC Codes of larger indices } \label{tab:cyclic}
\begin{center}
\vspace*{0.2in}
\begin{tabular}[t]{|p{1.2 cm}|p{1.2 cm}|p{1.2 cm}|p{1.2 cm}|p{1.2 cm}|p{1.2 cm}|p{1.2 cm}|p{1.2 cm}|p{1.2 cm}|p{1.2 cm}|p{1.2 cm}|}
\hline
\multicolumn{1}{|c|}{{\small {[60,12,31}}} & \multicolumn{1}{c|}{{\small {%
[60,10,33]}}} & \multicolumn{1}{c|}{{\small {[100,20,46]}}} &
\multicolumn{1}{c|}{{\small {[110,22,50]}}} & \multicolumn{1}{c|}{{\small {%
[72,12,38]}}} & \multicolumn{1}{c|}{{\small {[96,16,48]}}} &
\multicolumn{1}{c|}{{\small {[70,10,40]}}} & \multicolumn{1}{c|}{{\small {%
[140,20,71]}}} & \multicolumn{1}{c|}{{\small {[96,12,54]}}} &
\multicolumn{1}{c|}{{\small {[160,20,84]}}} & \multicolumn{1}{c|}{{\small {%
[144,16,80]}}} \\ \hline\hline
{\tiny {$a^2 a^2aa^2a^2a$} {$1a1010$}} & {\tiny {$1 a^2 0 1 0a^2$} {$0 a^2
10 $}} & {\tiny {$a^2 a^2 0 1 a a1 a^2$} {$00 1a a^2 a^20 a 1 $} {$0 1 a^21
a^2 0$}} & {\tiny {$1 0 a a0$} {$a^2 0 1a^2 1 0 0$} {$a 0 a^2 a 0 a^2 0$}} &
{\tiny {$0 a^2 0 a^2 0 1$} {$a a^2 a^2 1 a 0$}} & {\tiny {$00a a^2 a a$} {$%
0a^20a^2 0 a^2$} {$a^2 0a^2 a^2$}} & {\tiny {$0 a^2 a a^2 0 $} {$0 1 1 a^2 a$%
}} & {\tiny {$1 a^2 1 a^2 a a^2 $} {$a^2 a 1a^2a^2 0 a^2$} {$a a^2 0 a^2 a
a^2 0$}} & {\tiny {$a^2 a^20 a a^20$} {$1 0 0 0 a 1$}} & {\tiny {$a a^2 a^2
a^2 1 0 $} {$a^2 0 0a 1 a 1$} {$a a 1 a^2 0 a a^2$}} & {\tiny {$1 1 a 1 0
a^2 1$} {$a^2 0 0 0 1 0 a a^2 a$}} \\ \hline
{\tiny {$a a 1 a^2 0 1 1 $} {$a a^2 1 a a^2$}} & {\tiny {$a a 1 a^2 1 a$} {$%
0 0 a^2 0$}} & {\tiny {$1 1 a^2 a a^2 0 a$} {$1 1 a^2 0 1 0 a$} {$a^2 a^2
a^2 a 1 0$}} & {\tiny {$a^2 a 0 1 0 1 a$} {$a^2 1 a^2 a^2 1 a^21 1 $} {$aa
a^2 0 0 a^2 1$}} & {\tiny {$a 0 a^2 a^2 a 1$} {$a^2 1 1 a 1 a^2$}} & {\tiny {%
$a a a 0 1 0 1$} {$1 1 1 1 a^2 1 0$} {$a^2 a^2$}} & {\tiny {$a a 0 1 0 0 $} {%
$1 0 1 a^2$}} & {\tiny {$a^2 0 1 1 a a^2 a$} {$0 a^2 a^2 1 a^2 a 0$} {$a 0 1
a^2 a a$}} & {\tiny {$0 a^2 0 1 0 a^2$} {$a^2 a a a a^2 0$}} & {\tiny {$0 a
a 1 a^2 0 0$} {$1 a^2 1 a a^2 a a$} {$a^2 a 0 a^2 a a$}} & {\tiny {$a^2 a^2
0 a a a^2$} {$a a^2 a^2 a^2 0 a$} {$1 1 a^2 1$}} \\ \hline
{\tiny {$0 a^2 1 a a^2 a^2$} {$a a a a^2 1 0$}} & {\tiny {$a 0 a 1 1 0$} {$%
a^2 a^2 a 1$}} & {\tiny {$1 a a 0 0 1 a$} {$a^2 1 0 a^2 a^2 a 0$} {$1 a a^2
1 1 0$}} & {\tiny {$0 a a^2 0 1 a 0 a$} {$a^2 a a a a^2 1 a^2$} {$1 0 1 a^2
a^2 a^2 1$}} & {\tiny {$a^2 1 a a^2 0 1$} {$0 0 a a^2 0 a^2$}} & {\tiny {$0
1 1 1 a^2 0 0$} {$aa^2 a a^2 a^2 a^2$} {$0 a a$}} & {\tiny {$a^2 a^2 1 1 a^2
$} {$a a^2 1 a^2 0$}} & {\tiny {$a a a^2 a a^2 a a^2$} {$1 a 1 a^2 a 1 0$} {$%
a a^2 a a a^2 0$}} & {\tiny {$1 a a 0 1 a$} {$1 0 1 a a^2 a^2$}} & {\tiny {$%
a^2 1 0 0 a^2 a 1 a$} {$a 1 0 a^2 a^2 0 a^2$} {$0 a a a^2 1$}} & {\tiny {$0
1 a^2 1 1 a a$} {$1 a 0 1 a a^2$} {$1 a^2 a$}} \\ \hline
{\tiny {$a^2 0 a^2 a 1 a^2$} {$a 0 a^2 0 1 1$}} & {\tiny {$0 0 1 a a$} {$1 a
a 0 a$}} & {\tiny {$1 a 0 0 a^2 a a^2$} {$1 0 0 a^2 a^2 a 0$} {$1 0 1 1 a^2
a $}} & {\tiny {$1 a 1 a 0 0 0 0$} {$a a^2 a^2 a 0 a^2 1$} {$1 0 1 a 1 a^2 0$%
}} & {\tiny {$1 1 a a^2 0 0$} {$1 a^2 a 1 a a$}} & {\tiny {$0 a 1 a 0 a^2 1$}
{$a 0 0 1 0 1$} {$a^2 1 1$}} & {\tiny {$0 1 a^2 0 1 $} {$a^2 0 1 1 a^2$}} &
{\tiny {$a^2 a a^2 0 a^2 a a^2$} {$1 a^2 a a 0 1 0$} {$a a^2 a a 0 1$}} &
{\tiny {$a^2 a a^2 0 a 0 $} {$a a^2 0 1 a^2 a^2$}} & {\tiny {$0 1 a^2 1 1 a
1 $} {$1 a a 1 a^2 0 1$} {$1 a^2 0 a a a$}} & {\tiny {$1 1 0 a^2 0 a a$} {$0
0 a a a^2 a$} {$a a^2 a^2$}} \\ \hline
{\tiny {$a 1 0 0 a^2 1$} {$1 a a^2 1 a a$}} & {\tiny {$1 1 a 0 1 a^2$} {$a^2
1 a^2 0$}} & {\tiny {$1 a^2 1 a^2 1 a 0$} {$1 0 a^2 0 1 1 1$} {$a^2 0 a^2 1
a 1$}} & {\tiny {$a^2 1 0 0 0 0 a^2 0$} {$a^2 a a a 0 1 a^2$} {$a a^2 a a
a^2 a a^2$}} & {\tiny {$1 a^2 a a^2 a 1$} {$a a^2 a^2 a 1 1$}} & {\tiny {$0
0 a 1 0 a^2 $} {$1a 1 a 1 a$} {$a 1 0 a^2$}} & {\tiny {$0 0 a^2 0 a $} {$1
a^2 a a^2 a^2$}} & {\tiny {$1 a^2 0 1 0 a a^2$} {$1 a a^2 a 0 a^2 a^2$} {$a
0 a 0 a^2 1$}} & {\tiny {$1 1 0 1 0 1$} {$a 0 a^2 a 0 1$}} & {\tiny {$0 a 1
0 1 a 0 1$} {$0 a^2 a^2 a^2 a 1$} {$a^2 0 0 1 0 0$}} & {\tiny {$a 0 1 0 a 0
111$} {$a a^2 0 a^2 a a^2 0$}} \\ \hline
- & {\tiny {$a^2 1 0 a 1$} {$0 a^2 a 1 0$}} & - & - & {\tiny {$a^2 a^2 a^2 a
a 1$} {$a^2 a^2 1 0 1 a^2$}} & {\tiny {$a a^2 1 a^2 0 1 $} {$a^2 a^2 a^2 a^2
a$} {$0 1 1 1 1$}} & {\tiny {$a^2 a 0 0 1 1 1 a 0 a$}} & {\tiny {$a^2 1 a^2
0 1 a^2 a$} {$1 0 a 1 0 0 a^2$} {$a^2 a a 1 0 a^2$}} & {\tiny {$0 a a^2 1 1
0 $} {$a^2 0 0 0 a a$}} & {\tiny {$1 a a 0 0 a^2 1$} {$a^2 1 0 1 a 0 1$} {$0
1 a^2 0 0 1$}} & {\tiny {$0 a 0 a^2 1 0 a^2 1 $} {$0 1 0 1 1 1 1 a^2$}} \\
\hline
- & - & - & - & - & - & {\tiny {$0 a^2 1 0 a^2$} {$a a^2 1 a^2 a$}} & {\tiny
{$0 1 a^2 a^2 1 a^2 a$} {$1 a^2 a^2 a^2 a^2 1$} {$a^2 0 a^2 0 0 0 a^2$}} &
{\tiny {$a^2 a a a 0 a^2$} {$1 0 a^2 0 a^2 a^2$}} & {\tiny {$1 0 1 1 a a^2
a^2 1$} {$a^2 a a a a^2 1 0$} {$a 1 1 a^2 a^2$}} & {\tiny {$a a 1 1 a 0 1
a^2 $} {\ $1 1 a 1 a^2 a 1 a$}} \\ \hline
- & - & - & - & - & - & - & - & {\tiny {$a 0 0 1 a^2 a^2$} {$a^2 a 1 0 a 0$}}
& {\tiny {$1 a^2 a 0 a a 1 0$} {$a^2 0 a a 1 1 1 a$} {$1 a^2 0 a$}} & {\tiny
{$a 1 0 a^2 1 0 a 0$} {$a^2 a^2 1 1 0 0 a a$}} \\ \hline
- & - & - & - & - & - & - & - & - & - & {\tiny {$a^2 0 1 a 1 a a^2 1$} {$0 0
a a a 0 a^2 a$}} \\ \hline
\end{tabular}
\end{center}
\end{table}


\begin{thebibliography}{99}
\bibitem{tarokh98} V. Tarokh, N. Seshadri, and A. R. Calderbank,
\textquotedblleft Space-time codes for high data rate wireless
communication: Performance criterion and code
construction,\textquotedblright\ IEEE Trans. Inform. Theory, vol. 44, pp.
744--765, Mar. 1998.

\bibitem{alamouti98} S. Alamouti, \textquotedblleft A simple transmit
diversity technique for wireless communications,\textquotedblright\ \textit{%
IEEE J. Select. Areas Commun}., vol. 16, pp. 1451--1468, Oct. 1998.

\bibitem{tarokh99} V. Tarokh, H. Jafarkhani, and A. R. Calderbank,
\textquotedblleft Space-time block codes from orthogonal
designs,\textquotedblright\ \textit{IEEE Trans. Inform. Theory}, vol. 45,
no. 5, pp. 1456-1467, July 1999.

\bibitem{duman08} T. M. Duman and A. Ghrayeb, \textit{Coding for MIMO
Communication Systems}, Wiley and Sons, 2008.

\bibitem{nuh} N. Aydin, and D.K. Ray-Chaudhuri, \emph{Quasi Cyclic Codes
Over ${\mathbb{Z}_{4}}$ and Some New Binary Codes, } \textit{IEEE Trans. on
Information Theory}, vol. 48, number 7, pp.2065-2069, July 2002.

\bibitem{cyclicZ4} N. Aydin and T. A. Gulliver, \emph{Some good cyclic and
quasi-twisted ${\mathbb{Z}_4}$-linear codes,} to appear in \textit{Ars Comb}.

\bibitem{magma} W. Bosma, J. J. Cannon, and C. Playoust, The Magma algebra
system I: The user language, \emph{J. Symbolic Computation}, vol. 24, pp.
235-266, 1997.

\bibitem{Boucher07} D. Boucher, W. Geismann, and F. Ulmer, \textquotedblleft
Skew-Cyclic Codes\emph{,\textquotedblright } \textit{Applicable Algebra in
Engineering, Communication and Computing}, vo. 18, issue 4, july 2007, P.
379-389.

\bibitem{Boucher07pri} D. Boucher, W. Geismann, and F. Ulmer,
\textquotedblleft Coding with Skew Polynomial Ring,\textquotedblright\
Preprint.

\bibitem{newcode} A. R. Calderbank and G. McGuire, \textquotedblleft
Construction of a $(64, 2^{37}, 12)$ code via Galois rings
\textquotedblright\, \emph{Des., Codes Cryptogr.}, vol. 10, pp. 157-165,
1997.

\bibitem{chen} Z. Chen, \emph{\ }\textquotedblleft Six New Binary
Quasi-Cyclic Codes,\textquotedblright\ \textit{IEEE Trans. on Information
Theory}, Vol. 40, No. 5, p. 1666-1667, September 1994.

\bibitem{Hamm} A.R. Hammons Jr., P.V. Kumar, A.R. Calderbank, N.J.A. Sloane
and P. Sole, \emph{\ The }$Z_{4}$\emph{-Linearity of Kerdock, Preparata,
Goethals and Related Codes}, IEEE Trans. Inform. Theory, Vol. 40 , p.
301-319, March 1994.

\bibitem{server2} M. Grassl, \emph{\ }Table of bounds on linear codes,
Available at: http://www.codetables.de.

\bibitem{optimal} P. P. Greenough and R. Hill, \emph{\ }\textquotedblleft
Optimal Ternary Quasi-Cyclic Codes,\textquotedblright\ \textit{Designs Codes
and Cryptography}, Vol. 2, p. 81-91, 1992.

\bibitem{nine} T. A. Gulliver, and V. K. Bhargava, \textquotedblleft Nine
Good Rate $(m-1)/pm$ Quasi-Cyclic Codes,\textquotedblright\ \textit{IEEE
Trans. on Information Theory}, Vol. 38, No 4, p. 1366-1369, July 1992.

\bibitem{best} T. A. Gulliver, and V. K. Bhargava, \textquotedblleft Some
Best Rate $1/p$ and Rate $(p-1)/p$ Systematic Quasi-Cyclic Codes over GF(3)
and GF(4),\textquotedblright\ \textit{IEEE Trans. on Information Theory},
Vol. 38, No 4, p. 1369-1374, July 1992.

\bibitem{Jacobson43} N. Jacobson, \emph{The Theory of Rings,} Amer. Math.
Soc., 1943.

\bibitem{kasami} T. Kasami,\emph{\ }\textquotedblleft A Gilbert-Varshamov
Bound for Quasi-Cyclic Codes of Rate $1/2$,\textquotedblright\ \textit{IEEE
Trans. Inform. Theory}, Vol. 20 , p. 679, 1974.

\bibitem{grobner2} K. Lally and P. Fitzpatrick, \textquotedblleft Algebraic
Structure of Quasi-Cyclic Codes,\textquotedblright\ \textit{Discr. Appl. Math%
}., Vol. 111, p. 157-175, 2001.

\bibitem{Mcdonald74} B. R. McDonald, \emph{Finite Rings With Identity, }%
Marcel Dekker Inc., New York, 1974.

\bibitem{Ore33} O. Ore, \textquotedblleft Theory of Non-commutative
Polynomials,\textquotedblright\ \textit{Annals of Math}., 34 (1933), 0.
480-508.

\bibitem{sonqc} S. Ling and P. Sole, \textquotedblleft On the Algebraic
Structure of the Quasi-Cyclic Codes I: Finite Fields,\textquotedblright\
\textit{IEEE Trans. Inform. Theory}, Vol. 47, No. 7, p. 2751-2759, 2001.

\bibitem{seguin} G. E. S\'{e}guin and G. Drolet, \textquotedblleft The
Theory Of $1$-Generator Quasi-Cyclic Codes,\textquotedblright\ preprint 1990.

\bibitem{siap} Irfan Siap, \emph{\ }\textquotedblleft New Codes over GF(8 )
with Improved Minimum Distances,\textquotedblright\ \textit{Ars Combinatoria}%
, Vol 71., p. 239-247, April 2004.

\bibitem{ieee} I. Siap, N. Aydin and D. K. Ray-Chaudhuri, \textquotedblleft
New Ternary Quasi-Cyclic Codes with Better Minimum
Distances,\textquotedblright\ \textit{IEEE Information Theory}, Vol. 46 N.
4, p. 1554-1558, July 2000.

\bibitem{koshy} K. Thomas, \emph{\ }\textquotedblleft Polynomial Approach to
Quasi-Cyclic Codes,\textquotedblright\ \textit{Bul. Cal. Math}. Soc. 69,
p.51-59, 1977.
\end{thebibliography}
\end{document}